\newcommand{\xtimes}[2][1]{$#1\times #2$}
\newcommand{\noinbf}[1]{\noindent \textbf{#1}}
\newcommand{\email}[1]{\href{mailto:#1}{#1}}
\newcommand{\ignoreforced}[1]{}
\newcommand{\newtext}[1]{\color{black}#1\color{black}\xspace}
\newcommand{\newnewtext}[1]{\color{black}#1\color{black}\xspace}
\newcommand{\new}[1]{\color{black}#1\color{black}\xspace}
\newcommand{\newest}[1]{\color{red}#1\color{black}\xspace}
\newtheorem{theorem}{Theorem}[section]
\newtheorem{lemma}[theorem]{Lemma}
\newtheorem{corollary}[theorem]{Corollary}
\newtheorem{definition}[theorem]{Definition}
\def\compactify{\itemsep=0pt \topsep=0pt \partopsep=0pt \parsep=0pt}
\let\latexusecounter=\usecounter
\newenvironment{itemize*}
  {\def\usecounter{\compactify\latexusecounter}
   \begin{itemize}}
  {\end{itemize}\let\usecounter=\latexusecounter}
\newenvironment{enumerate*}
  {\def\usecounter{\compactify\latexusecounter}
   \begin{enumerate}}
  {\end{enumerate}\let\usecounter=\latexusecounter}
\newenvironment{description*}
  {\begin{description}\compactify}
  {\end{description}}
\def\nullglue{{\rm null}}
\def\emptytile{{\rm empty}}
\def\north{\mathop{\rm north}\nolimits}
\def\south{\mathop{\rm south}\nolimits}
\def\west{\mathop{\rm west}\nolimits}
\def\east{\mathop{\rm east}\nolimits}
\title{New Geometric Algorithms \\ for Fully Connected Staged Self-Assembly\thanks{A preliminary extended abstract appears in the Proceedings of DNA'21, 2015~\cite{dfs+-ngafcssa-15}.}}
\author{
   Erik D. Demaine\thanks{
   	CSAIL, MIT, USA. \email{edemaine@mit.edu}}
   \and
   S\'{a}ndor P. Fekete\thanks{
   Department of Computer Science, TU Braunschweig, Germany. \email{s.fekete@tu-bs.de}, \email{c.scheffer@tu-bs.de}, \email{arne.schmidt@tu-bs.de}}
   \and Christian Scheffer$^\ddagger$
   \and Arne Schmidt$^\ddagger$
}
\date{}
\begin{document}

\maketitle

\begin{abstract}
We consider \textit{staged self-assembly systems},
in which square-shaped tiles can be added to bins in several stages. Within these
bins, the tiles may connect to each other, depending on the {\em glue types} of their edges. 
Previous work by Demaine et al.\ 
showed that a relatively small number of tile types
suffices to produce arbitrary shapes in this model. However, these constructions were only
based on a spanning tree of the geometric shape, so they did not produce
full connectivity of the underlying grid graph in the case of shapes with holes; 
self-assembly of fully connected assemblies with a polylogarithmic number of stages was left as a major open problem.
We resolve this challenge by presenting new systems for staged assembly that
produce fully connected polyominoes in $\mathcal{O}(\log^2 n)$ stages, for various scale factors and
temperature $\tau=2$ as well as $\tau=1$.
Our constructions work even for shapes with holes and \new{use} only a constant number
of glues and tiles. Moreover, the underlying approach is more geometric in nature, implying
that it promises to be more feasible for shapes with compact geometric description.
\end{abstract}



\section{Introduction}
In {\em self-assembly}, a set of simple {\em tiles} form complex structures without
any active or deliberate handling of individual components. Instead, the overall construction
is governed by a simple set of rules, which describe how mixing the tiles leads
to bonding between them and eventually a geometric shape.

The classic theoretical model for self-assembly is the 
\textit{abstract tile-assembly model} (aTAM).
It was first introduced by Winfree~\cite{winfree1998,Rot00}. The
\textit{tiles} used in this model are building blocks, 
which are unrotatable squares with a specific glue on
each side. Equal glues have a connection strength and may stick together. 
The \textit{glue complexity} of a tile set $T$ is the number of different glues \newnewtext{on all the tiles in $T$},
while the \textit{tile complexity} of $T$ is the number of different tile types in $T$.
If an additional tile wants to attach to the existing assembly
by making use of matching glues,
the sum of corresponding glue strengths 
needs to be at least some minimum value $\tau$, which is called the
\textit{temperature}.  

A generalization of the aTAM called the {\em two-handed assembly model} (2HAM) was introduced by Demaine et al.~\cite{DDF08}. 
While in the aTAM, only individual tiles can be attached to an existing intermediate assembly, the 2HAM
allows attaching other partial assemblies.
If two partial assemblies (``supertiles'') want to assemble, then 
the sum of the glue strength along the whole common boundary
needs to be at least $\tau$. 

In this paper we consider the \textit{staged tile assembly model}
introduced in~\cite{DDF08}, which is based on the 2HAM. In
this model the assembly process is split into sequential stages that are kept in separate bins, with
supertiles from earlier stages mixed together consecutively to gain new
supertiles. We can either add a new tile to an existing bin, or we pour one bin into
another bin, such that the content of both \new{gets} mixed; afterwards, unassembled parts get removed. The overall number 
of stages and bins \newnewtext{of a system} are the \textit{stage complexity} and the \textit{bin complexity}. 
Demaine et al.~\cite{DDF08} achieved several results
summarized in Table~\ref{overview}. Most notably, they presented a
system (based on a spanning tree) that can produce arbitrary polyomino shapes $P$ in $\mathcal{O}(\mbox{\em diameter})$
many stages, $\mathcal{O}(\log N)=\mathcal{O}(\log n)$ bins and a constant number of glues, where 
$N$ is the number of \new{unit squares, called {\em pixels}, whose union
forms  $P$}, $n$ is the size of \new{the bounding box}, i.e., a smallest square
containing $P$, and the diameter is measured \new{by the maximum length of a
shortest path between any two pixels in the adjacency graph of the
pixels in $P$; this} can be as big as $N$. The downside is that the resulting
\new{supertiles} are not fully connected. For achieving full connectivity, only the special case of monotone shapes was resolved
by a system with $\mathcal{O}(\log n)$ stages; for hole-free shapes, \new{Demaine et al.~\cite{DDF08}} were able to give a system
with full connectivity, scale factor 2, but $\mathcal{O}(n)$ stages. This left a major open problem: designing a staged assembly system
with full connectivity, polylogarithmic stage complexity and constant scale factor for general shapes.

{\bf Our results.}
We show that for any polyomino, even with holes, there is a staged assembly system with the following properties, both
for $\tau=2$ and $\tau=1$.
\begin{enumerate}
\item polylogarithmic stage complexity,
\item constant glue and tile complexity,
\item constant scale factor,
\item full connectivity.
\end{enumerate} 
See Table~\ref{overview} for an overview. The main novelty of our method is to focus on the underlying geometry
of a constructed shape $P$, instead of just its connectivity graph. This results in \newtext{bin complexities} that are a function
of $k$, the number of vertices of $P$: while $k$ can be as big as $\Theta(n^2)$, 
$n$ can be arbitrarily large for fixed $k$, implying that our approach promises to be more suitable
for constructing natural shapes with a clear geometric structure.

\begin{table*}[t!]
\centering
\fontsize{7.0}{14} \selectfont
\begin{tabular}{l|@{\,}c@{\,}|@{\,}c@{\,}|@{\,}c@{\,}|@{\,}c@{\,}|@{\,}c@{\,}|@{\,}c@{\,}|@{\,}c@{\,}|@{\,}c@{\,}}
\textbf{Lines and Squares} & \textbf{Glues} & \textbf{Tiles} & \textbf{Bins} & \textbf{Stages} & \textbf{$\tau$} & \textbf{Scale} & \textbf{Conn.} & \textbf{Planar}\\
\hline
Line \cite{DDF08} & 3 & 6 & 7 & $\mathcal{O}(\log n)$ & 1 & 1 & full & yes\\
\hline
Square --- Jigsaw techn. \cite{DDF08} & 9 & $\mathcal{O}(1)$ & $\mathcal{O}(1)$ & $\mathcal{O}(\log n)$ & 1 & 1  & full & yes\\
\hline
Square --- $\tau = 2$ {\bf (Sect.~\ref{tsq})} & 4 & $\mathcal{O}(1)$ & $\mathcal{O}(1)$ & $\mathcal{O}(\log n)$ & 2 & 1 & full & yes\\ 
\hline
\multicolumn{9}{c}{}\\

\textbf{Arbitrary Shapes} & \textbf{Glues} & \textbf{Tiles} & \textbf{Bins} & \textbf{Stages} & \textbf{$\tau$} & \textbf{Scale} & \textbf{Conn.} & \textbf{Planar}\\
\hline
Spanning Tree Method \cite{DDF08} & 2 & 16 & $\mathcal{O}(\log n)$ & $\mathcal{O}(diameter)$ & 1 & 1 &partial & no\\
\hline
Monotone Shapes \cite{DDF08} & 9 & $\mathcal{O}(1)$ & $\mathcal{O}(n)$ & $\mathcal{O}(\log n)$ & 1 & 1 & full & yes\\
\hline
Hole-Free Shapes \cite{DDF08} & 8 & $\mathcal{O}(1)$ & $\mathcal{O}(N)$ & $\mathcal{O}(N)$ & 1 & 2 & full & no\\
\hline
Shape with holes {\bf (Sect.~\ref{holeshape2})} & $\new{6}$ & $\mathcal{O}(1)$ & $\mathcal{O}(k)$ & $\mathcal{O}(\log^2 n)$ & 2 & 3 & full & no\\
\hline
Hole-Free Shapes {\bf (Sect.~\ref{holeshape2})} & $\new{6}$ & $\mathcal{O}(1)$ & $\mathcal{O}(k)$ & $\mathcal{O}(\log n)$ & 2 & 3 & full & no\\
\hline
Hole-Free Shapes {\bf (Sect.~\ref{holefreeshape})} & $18$ & $\mathcal{O}(1)$ & $\mathcal{O}(k)$ & $\mathcal{O}(\log^2 n)$ & 1 & 4 & full & no\\
\hline
Shape with holes {\bf (Sect.~\ref{holeshape1})} & $20$ & $\mathcal{O}(1)$ & $\mathcal{O}(k)$ & $\mathcal{O}(\log^2 n)$ & 1 & 6 & full & no\\
\hline
\end{tabular}
\vspace*{2mm}

\caption[Result overview]{Overview of results from \cite{DDF08} and this paper.
The number of \newtext{pixels} of $P$ is denoted by $N\in\mathcal{O}(n^2)$, $n$ is the
side length of a smallest bounding square, while $k$ is the number of vertices
of the polyomino, with $k\in\Omega(1)$ and $k\in\mathcal{O}(N)$. \newtext{The {\em diameter} is the maximum of all shortest paths between any two pixels in the adjacency graph of the corresponding shape.}}
\label{overview}
\end{table*}


\ignoreforced{
\subsection{Overview}
Before we take a look at the \textit{staged assembly systems}, we look on the definitions of a staged assembly system and on the metrics to know what constitutes an efficient system in chapter \ref{notndef}.
In chapter \ref{basics} we present some basic staged assembly system that assemble strips and squares. We begin with the easier part, the strips, in section \ref{lines} and then show how to assemble squares with a temperature-1 system in section \ref{jigsq} and how we can save some glue types if we use a temperature-2 system in section \ref{tsq}. This methods provide a basis for further systems that assemble arbitrary polyominoes (see chapter \ref{nomino}). Here we use the ability to assemble squares efficiently to assemble a hole-free shape in section \ref{holefreeshape}. The main idea for this method comes from \cite{DDF08} where a strip of the polyomino gets assembled piece by piece and other components get connected to it when the time is right. In section \ref{holeshape} we are using a staged assembly system that assembles strips to assemble a \textit{backbone} of a polyomino with whose help we can fill up the polyomino. In section \ref{monshape} we additionally take a look on monotone shapes. Two examples of the staged assembly systems of chapter \ref{holefreeshape} and \ref{holeshape} can be found in the appendix chapters \ref{appendixA} and \ref{appendixB}, respectively.\\
}

{\bf Related work.}
As mentioned above, our work is based on the 2HAM. There is a variety
of other models, e.g., see \cite{aggarwal2005complexities}.
A variation of the staged 2HAM is the {\em Staged Replication Assembly Model} 
by Abel et al.~\cite{abel2010shape}, which aims at reproducing supertiles by using
\textit{enzyme self assembly}. \
Another variant is the {\em \new{Signal-passing} Tile Assembly Model} introduced by Padilla et al.~\cite{padilla2014asynchronous}.

Other related geometric work by Cannon et al.~\cite{DBLP:journals/corr/abs-1201-1650} 
and Demaine et al.~\cite{DBLP:journals/corr/abs-1212-4756} considers
reductions between different systems, often based on geometric properties.
Fu et al.~\cite{fu2012self} use geometric tiles in a generalized tile assembly model to
assemble shapes. Fekete et al.~\cite{f_new} study the power
of using more complicated polyominoes as tiles.

\ignoreforced{
Kao and Schweller~\cite{kao06} and Summers~\cite{DBLP:journals/corr/abs-0907-1307} use a sequence of different
temperatures called \textit{temperature programming} to assemble objects.
Keenan et al.~\cite{DBLP:journals/corr/abs-1303-2416} and Brun~\cite{brun2007arithmetic}
Randomized aspects of composition were considered by Doty~\cite{doty2010randomized}
and Kao and Schweller~\cite{kao2008randomized}.
}

Using stages has also received attention in DNA self assembly.
Reif~\cite{reif1999local} uses a stepwise model for parallel computing.
Park et al.~\cite{park2006finite} consider assembly techniques with hierarchies to assemble DNA
lattices. Somei et al.~\cite{somei2006microfluidic} use a stepwise assembly
of DNA tiles. {Padilla et al.~\cite{padilla2012hierarchical} include active signaling and glue activation in the aTAM to control hierarchical assembly of Robinson patterns.} None of these works considers complexity aspects.

\section{The Staged Assembly Model}
\label{sec:model}
\new{In this section, we present basic definitions common to most 
assembly models, followed by a description of the staged assembly model, and finally we
define various metrics to measure the efficiency of a staged assembly system.
Staged assembly systems were introduced by Demaine et al.~\cite{DDF08}. As we use
their basic definitions, we quote the corresponding framework (mostly verbatim)
for self-containedness of this paper. }

\new{\paragraph{Polyominoes.}
A polyomino $P$ is a polygon with axis-parallel edges of integer length.
(See Figure~\ref{fig:polyomino}.)
A vertex of $P$ is a {\em reflex vertex}, if its interior angle is $3\pi/2$.
Every polyomino can be decomposed into a set of unit squares, called \emph{pixels};
without loss of generality, we assume they are centered at points from $\mathbb{Z}^2$;
as described below, this means they correspond to tile positions in a configuration.
The \emph{degree} of a pixel is
the number of (vertically or horizontally) adjacent pixels in the polyomino. A
pixel is a {\em boundary pixel} if at least one of the eight (axis parallel or diagonal) neighbor
positions is not occupied by a pixel in $P$. A boundary pixel $p$ is an {\em ordinary} boundary pixel
if precisely two (say, $p'$ and $p''$) of its four vertical or horizontal neighbor pixels in $P$ are boundary pixels, 
and the positions of $p$, $p'$, $p''$ are collinear; a boundary pixel is a 
{\em corner} pixel of $P$ if is not an ordinary boundary pixel. 
The {\em corner set of a polyomino} is the set of all
corner pixels along all boundaries.} 

\begin{figure}[ht]
  \begin{center}
       \includegraphics[height=6cm]{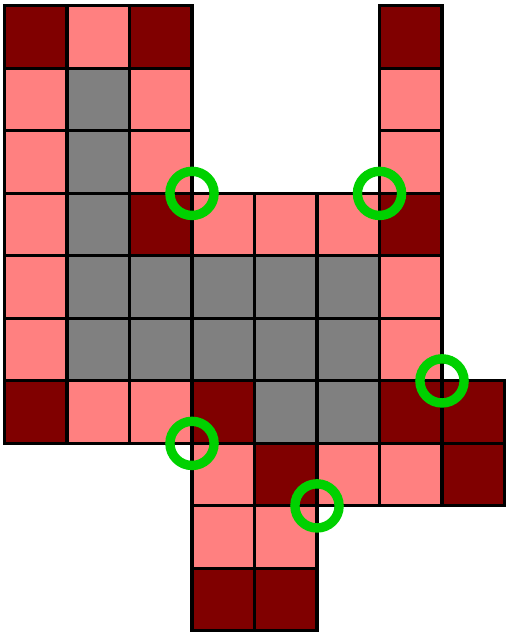}
  \end{center}
  \vspace*{-12pt}
  \caption{A polyomino, its reflex vertices (indicated by green circles), its corner pixels (dark red) and its ordinary
boundary pixels (light red).}
  \label{fig:polyomino}
\end{figure}

\paragraph{Tiles and tile systems.} A \emph{(Wang) tile} $t$ is a \new{non-rotatable} unit square
defined by the ordered quadruple
$\langle \north(t), \east(t), \south(t), \west(t) \rangle$
of glues on the four edges, also called \emph{sides}, of the tile.
Each \emph{glue} is taken from a finite alphabet $\Sigma$,
which includes a special ``null'' glue denoted {\em null}.
For simplicity of bounds, we do not count the $\nullglue$ glue
in the \emph{glue complexity} $g=|\Sigma|-1$.

A \emph{tile system} is an ordered triple $\langle T, G, \tau \rangle$
consisting of the \emph{tileset} $T$ (a set of distinct tiles),
the \emph{glue function} $G: \Sigma^2 \rightarrow \{0, 1, \dots, \tau\}$,
and the \emph{temperature}~$\tau$ (a positive integer).
It is assumed that $G(x, y) = G(y, x)$ for all $x, y \in \Sigma$ and that
$G(\nullglue,x) = 0$ for all $x \in \Sigma$.
Indeed, in all of our constructions $G(x,y) = 0$ for all $x \neq y$,
and each $G(x,x) \in \{1, 2, \dots, \tau\}$.
The \emph{tile complexity} of the system is $|T|$.
\vspace{-.1in}
\paragraph{Configurations.}
A \emph{configuration}\new{, which we also call an \emph{assembly},} is a function
$C: \mathbb{Z}^2 \rightarrow T \cup \{\emptytile\}$,
where $\emptytile$ is a special tile that has the $\nullglue$
glue on each of its four edges. The {\em shape} of a configuration $C$ is
the set of positions $(i, j)$ that do not map to the $\emptytile$ tile.
The shape of a configuration can be disconnected, corresponding to
several distinct supertiles. 

\paragraph{Adjacency graph and supertiles.}
Define the \emph{adjacency graph} $G_C$ of a configuration $C$ as
follows.  The vertices are coordinates $(i, j)$ such that
$C(i, j) \neq \emptytile$.  There is an edge between two vertices
$(x_1, y_1)$ and $(x_2, y_2)$ if and only if $|x_1-x_2|+ |y_1 - y_2| = 1$. A set of vertices $\{(x_1,y_1),..., (x_k,y_k)\}$ are \emph{collinear} if $x_1= \dots x_k$ or $y_1 = \dots = y_k$. A {\em vertex} of a configuration $C$ is a vertex $(i,j)$ of $G_C$ if there are $x,y \in \{ -1,1 \}$ such that $C(i+x,j),C(i,j+y) = \emptytile$ and $C(i-x,j),C(i,j-y) \neq \emptytile$. A vertex $C(i,j)$ of $C$ is \emph{reflex} if $C(i,j)$ is adjacent to one position that is the empty tile.

A \emph{supertile} is a maximal connected subset $G^{\prime}$ of~$G_C$,
i.e., $G^{\prime} \subseteq G_C$ such that, for every connected
subset~$H$, if $G^{\prime} \subseteq H \subseteq G_C$, then $H = G^{\prime}$.
For a supertile $S$, let $|S|$ denote the number of nonempty positions
(tiles) in the supertile. We call $|S|$ the \emph{size} of $S$.
Throughout this paper, we informally refer to (lone) tiles as
a special case of supertiles.



If every two adjacent tiles in a supertile share a positive strength
glue type on abutting edges, the supertile is \emph{fully
connected}. \new{All provided assembly systems in this paper are fully connected.}


\vspace{-.1in}
\paragraph{Staged assembly systems.}  For any two supertiles $X$ and $Y$, the \emph{combination}
set $C^{\tau}_{(X,Y)}$ of $X$ and $Y$ is defined to be the set of
all supertiles obtainable by placing $X$ and $Y$ adjacent
to each other (without overlapping) such that, if we list each newly
coincident edge $e_i$ with edge strength $s_i$, then $\sum s_i \geq \tau$.

A \emph{bin} is a pair $(S, \tau)$, where $S$ is a set of initial
supertiles whose tile types are contained in a given set of tile types $T$, and $\tau$ is a
temperature parameter.  For a bin $(S, \tau)$, the set of
\emph{produced} supertiles $P'_{(S,\tau)}$ is defined recursively as
follows:  (1)~$S \subseteq P'_{(S,\tau)}$ and (2)~for any $X,Y\in
P'_{(S,\tau)}$, $C^\tau_{(X,Y)} \subseteq P'_{(S,\tau)}$. The set of
\emph{terminally} produced supertiles of a bin $(S,\tau)$ is
$P_{(S,\tau)} = \{ X\in P' \mid Y\in P'$, $C^\tau_{(X,Y)} =
\emptyset \}$.  The set of supertiles $P$ is \emph{uniquely}
produced by bin $(S,\tau)$ if each supertile in $P'$ is of finite
size. 

We can \emph{create} a bin of a single tile type $t \in T$, we can
\emph{merge} multiple bins together into a single bin, and we can
\emph{split} the contents of a given bin into multiple new bins.  In
particular, when splitting the contents of a bin, we assume the
ability to extract only the unique terminally produced set of
supertiles $P$, while filtering out additional partial assemblies in
$P'$. 





An \emph{$r$-stage $b$-bin mix graph} $M$ consists of $r b+1$
vertices, $m_*$ and $m_{i,j}$ for $1\leq i \leq r$ and $1\leq j \leq
b$, and an arbitrary collection of edges of the form $(m_{r,j},m_*)$
or $(m_{i,j}, m_{i+1,k})$ for some $i, j, k$.

	A \emph{staged assembly system} is a $3$-tuple $\langle M_{r,b}, \{T_{i,j}\},
\{\tau_{i,j}\} \rangle$, where $M_{r,b}$ is an $r$-stage $b$-bin mix
graph, each $T_{i,j}$ is a set of tile types, and each $\tau_{i,j}$
is an integer temperature parameter.  Given a staged assembly system,
for each $1\leq i \leq r$, $1\leq j \leq b$, we define a
corresponding bin $(R_{i,j}, \tau_{i,j})$, where $R_{i,j}$ is
defined as follows:
\begin{enumerate*}
\item $R_{1,j}= T_{1,j}$ (this is a bin in the first stage);
\item For $i\geq 2$,
  $\displaystyle R_{i,j}= \Big(\bigcup_{k:\ (m_{i-1,k},m_{i,j})\in M_{r,b}} P_{(R_{(i-1,k)},\tau_{i-1,k})}\Big) \cup T_{i,j}$.
\item $\displaystyle R_* =\bigcup_{k:\ (m_{r,k},m_{*})\in M_{r,b}} P_{(R_{(r,k)},\tau_{r,k)})}$.
\end{enumerate*}

The set of terminally produced supertiles for a staged assembly system are defined as $P_{(R_{*}, \tau_*)}$. A staged assembly system uniquely produces the set of supertiles $P_{(R_{*}, \tau_*)}$ if in each bin the terminal supertiles are unique.

Throughout this paper, we assume that, for all $i,j$,
$\tau_{i,j} = \tau$ for some fixed global temperature~$\tau$,
and we denote a staged assembly system as
$\langle M_{r,b}, \{T_{i,j}\}, \tau \rangle$.

	The following metrics are considered: The \emph{tile complexity $|\bigcup T_{i,j}|$}, the \emph{bin complexity $b$}, the \emph{stage complexity $r$}, and the \emph{temperature $\tau$}. 
	
	A staged assembly system is \emph{planar} if supertiles have obstacle-free paths
  to reach their mates in every possible sequence of attachments. In a \emph{fully connected} supertile, every two adjacent tiles have the same positive-strength glue along their common edge. Otherwise the supertile is \emph{partially connected}.

\section{Fully Connected Constructions for $\tau=2$}\label{t=2}

\ignoreforced{
Following the strips of \cite{DDF08}, the divide-and-conquer approach forms a
\textit{decomposition tree} by recursion (see figure \ref{DecTree}), whose
height corresponds to the stage complexity. With help of this tree
the strip can easily get assembled: \begin{figure}[h]
\includegraphics[width = \columnwidth]{LineDec}
\caption[$1\times 16$ Decomposition Tree]{A decomposition tree for an $1\times 16 $ strip (according to \cite{DDF08}).}
\label{DecTree}
\end{figure}

\begin{theorem}
A $1\times n$-strip can be assembled with a $\tau=1$ staged assembly system using $\mathcal{O}(\log n)$ stages, $3$ glues, $6$ tiles and $7$ bins.
\label{linetheorem}
\end{theorem}
\ignoreforced{
\begin{proof} First consider a $1\times 2^k$ strip. As seen in the decomposition tree of figure \ref{DecTree} the \xtimes{2^k} strip gets split into two \xtimes{2^{k-1}} strips. Without loss of generality we assume that the \xtimes{2^{k-1}} strip has glue types $a$ on the left side and $b$ on the right side with $a\neq b$. Then the left \xtimes{2^{k-1}} strip has $a$ on the left and the right \xtimes{2^{k-1}} strip has $b$ on the right side. Both parts have a common glue type  $c$ on the remaining side such that both can uniquely assemble. This construction also holds in deeper recursion levels. Therefore, three glues suffice. Hence, there exist only $\binom{3}{2} = 6$ possible \xtimes{2^k} strips, which can all be stored in a different bin. At level $k=0$ we store the six  different tiles in the bins. When mixing together the supertiles from level $i$ to get level $i+1$, the length of the strip doubles. Thus the stage complexity is $\mathcal{O}(\log n)$.

Now, each strip of length $n$ can be represented as a assembly of variable
\xtimes{2^i} strips. Therefore, the output gets stored in an additional separate bin.
Every time such a \xtimes{2^i} strip is constructed in level $i$, it
is put into the output bin. As before, three glues suffice to get the correct
output. The stage complexity remains $\mathcal{O}(\log n)$.  
\end{proof}
}

\ignoreforced{
There also exists a staged assembly system that uses only two bins. Prasad and
Tidor~\cite{prasad13} proved that any strip can also be assembled in $\mathcal{O}(\log n)$ stages using
two bins and a constant number of tile and glue types. But by
reducing the number of bins to 2, they increased the number of  glue types to
at least six  and doubled the number of stages. Both methods do not differ in the
complexities, but they do in absolute values. We make use of the method of Theorem~\ref{linetheorem}, as it
offers a good compromise between all complexities, we will work with this method in the whole thesis.

\subsection{\boldmath $n\times n$ squares (divide and conquer)}\label{jigsq}
}
\ignoreforced{
While assembling a square with the naive divide-and-conquer approach one
problem can occur: When two halves want to assemble they do not have to
assemble correctly if only one glue is used (see figure \ref{SquareProblem}).
Both halves could assemble with an offset up or down.  Hence, $\mathcal{O}(n)$ glues have
to be used for a unique assemblage, but this is too much.  \begin{figure}[th!]
\centering
\begin{minipage}{0.3\columnwidth}
\centering
\includegraphics[scale = 0.4]{SqTemp1Dec1}
\end{minipage}
\begin{minipage}{0.34\columnwidth}
\centering
\includegraphics[scale = 0.4]{SqTemp1Dec2}
\end{minipage}
\begin{minipage}{0.3\columnwidth}
\centering
\includegraphics[scale = 0.4]{SqTemp1Dec3}
\end{minipage}
\caption[Problems with assembling a square]{A $6\times 6$ Square that should split in the middle. (Left) The blue strip splits the square in the middle. (Middle) The two halves of the square with one kind of glue. (Right) Assembly can lead to undesired connections.}
\label{SquareProblem}
\end{figure}

To put things right we use the jigsaw method such that both halves can assemble uniquely. For the jigsaw method make some tooth on one half and a gap on the other half (see figure \ref{JigsawMeth}). With help of three glue types both halves can assemble in a unique way. Now we can recursively split in the middle until  the remaining shape has a sufficient small width. Then we can split the shapes recursively with a horizontal line in the middle.

\begin{figure}[h!]
\centering
\begin{minipage}{0.3\columnwidth}
\centering
\includegraphics[scale = 0.4]{Jigsaw1}
\end{minipage}
\begin{minipage}{0.357\columnwidth}
\centering
\includegraphics[scale = 0.4]{Jigsaw2}
\end{minipage}
\begin{minipage}{0.3\columnwidth}
\centering
\includegraphics[scale = 0.4]{Jigsaw3}
\end{minipage}
\caption[Jigsaw Method]{A $6\times 6$ Square that get split in the middle. (Left) The blue line splits the square in the middle. (Middle) Separated halves with teeth on the right half and the gap on the other. (Right) Uniquely assembled square.}
\label{JigsawMeth}
\end{figure}

Making use of a ``jigsaw'' technique, \cite{DDF08} gave an efficient method for squares.

\begin{theorem}
\label{nxn-square}
An \xtimes[n]{n}-square can be assembled with full connectivity in $\mathcal{O}(\log n)$ stages with $9$ glues, $\mathcal{O}(1)$ tiles, $\mathcal{O}(1)$ bins, and $\tau = 1$.
\end{theorem}
}
\ignoreforced{
\begin{proof}
First decompose the square with vertical cuts and then with horizontal cuts when the supertiles are thin enough, i.e., the width (or height) of the supertile is less or equals 3. With those cuts we obtain a decomposition tree (see figure \ref{JigTree}). The height of this tree is logarithmic, therefore we need only $\mathcal{O}(\log n)$ stages.

\begin{figure}[h]
\centering
\includegraphics[width = \columnwidth]{JigsawDec}
\caption[$7\times 7$ Decomposition Tree]{A decomposition tree for an \xtimes[7]{7} square  (according to \cite{DDF08}).}
\label{JigTree}
\end{figure}

Similar to an \xtimes{n} strip, we assign the glues at the cuts. For each cut we need three glues, thus it is sufficient to use nine glues at all for the vertical. For the horizontal cuts we can reuse those nine glues. Hence, we need at most nine glues to assemble a square. This also yields a constant number of tiles.

Now, it is still left that a constant number of bins suffice to assemble a square. We consider the vertical decomposition first. In each level of the decomposition tree there are only three kinds of supertiles: The leftmost supertile, the rightmost supertile and the middle supertiles. The left and right supertiles can be stored in separate bins. The middle supertiles have all the same shape but can differ in the number of columns by at most one. Then we have shapes with an odd number of columns and shapes with a even number of columns. Thus, we need bins for those shapes with an odd number of columns and with an even number. The left and right side of each of the middle supertiles can have one out of three triples of glues. Hence, we have a constant number of bins we need for the middle supertile and therefore $\mathcal{O}(1)$ bins are sufficient for each level of the decomposition tree. For the horizontal decomposition the arguments are the same, so we have still $\mathcal{O}(1)$ bins.
\end{proof}
}
}

In the following, we consider fully connected assemblies for temperature $\tau=2$.
We start by an approach for squares (Section~\ref{tsq}).
In Section~\ref{holeshape2} we describe how to extend this basic idea to assembling general polyominoes. 

\subsection{\boldmath $n\times n$ Squares, $\tau = 2$}\label{tsq}
For $\tau = 2$ assembly systems, it is possible to develop more efficient ways for constructing a square.
The construction is based on an idea by Rothemund and Winfree~\cite{Rot00}, which we adapt to staged assembly.
Basically, it consists of connecting two strips 
by a corner tile, before filling up this frame; see Figure~\ref{temp2square}.

\begin{figure}[h!]
\centering
\begin{minipage}{0.20\columnwidth}
\includegraphics[width = \columnwidth]{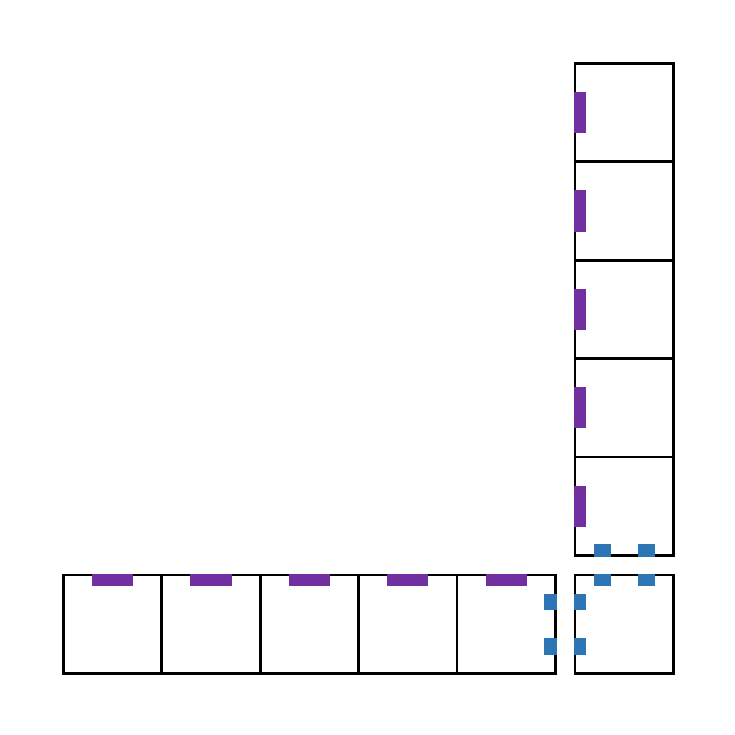}
\end{minipage}
\begin{minipage}{0.20\columnwidth}
\includegraphics[width = \columnwidth]{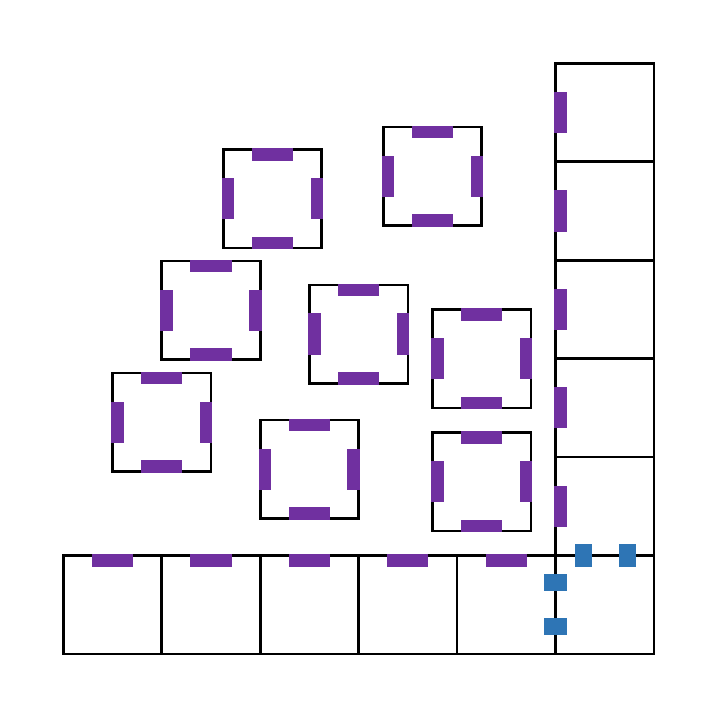}
\end{minipage}
\begin{minipage}{0.20\columnwidth}
\includegraphics[width = \columnwidth]{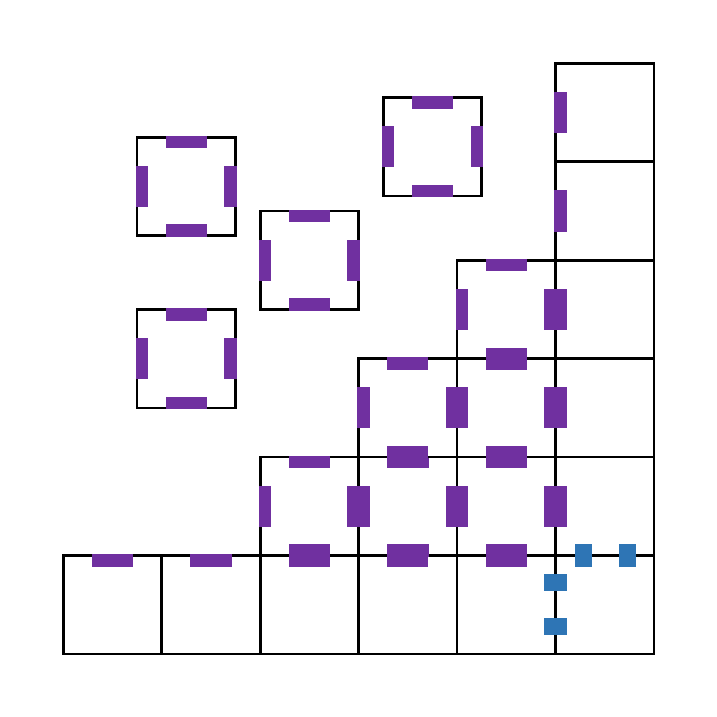}
\end{minipage}
\begin{minipage}{0.20\columnwidth}
\includegraphics[width = \columnwidth]{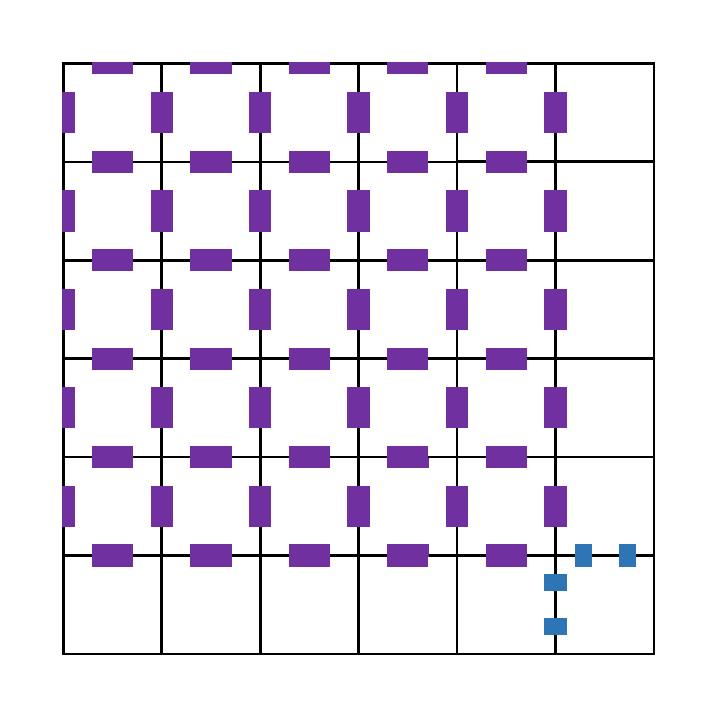}
\end{minipage}
\caption[Square assembly via a corner construction]{Construction of fully connected square using $\tau=2$ and a frame.}
\label{temp2square}
\end{figure}

\begin{theorem}
\label{th:one}
There exists a $\tau=2$ staged assembly system that assembles a fully connected $n \times n$ square with $\mathcal{O}(\log n)$ stages, $4$ glues, $14$ tiles and $7$ bins.
\label{squareTheorem}
\end{theorem}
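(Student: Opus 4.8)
The plan is to build the square in two conceptual phases: first assemble an $L$-shaped frame consisting of two orthogonal $1\times n$ strips joined at a corner, and then fill the interior of this frame row by row (or column by column) using the strength-$2$ cooperative binding that $\tau=2$ provides. The key advantage of temperature $2$ is that a filler tile can be required to match \emph{two} neighbors simultaneously before it attaches, which pins down its position uniquely and sidesteps the offset ambiguity that plagued the $\tau=1$ jigsaw construction. First I would invoke Theorem~\ref{linetheorem} (the $1\times n$ strip construction) to produce the two strips forming the frame in $\mathcal{O}(\log n)$ stages using a constant number of glues, tiles, and bins; I would reserve one strip to carry the ``west'' glue along its inner edge and the other to carry the ``south'' glue, and introduce a single corner tile that binds them into the $L$-frame.

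Next I would fill the frame. The idea is that once the two boundary strips are in place, each interior pixel has exactly one neighbor to its west and one to its south already present (working outward from the corner), so a single filler tile type carrying glue strength $1$ on its south and west edges can attach cooperatively (total strength $2 = \tau$) and in turn expose glue-$1$ edges on its north and east sides for the next pixels. To keep this growth fully connected and unambiguous, I would fill one row or column of the interior at a time: assemble a $1\times(n-1)$ filler strip (again via the strip construction, or as a pre-built supertile) whose south edge matches the current top boundary with strength $2$, and attach it so the whole row binds at once along its entire length. Iterating this doubling/strip-assembly process keeps the stage count logarithmic, since each filler strip itself assembles in $\mathcal{O}(\log n)$ stages and the strips can be produced in parallel across bins. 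Throughout, the glue budget stays constant: a handful of glues for the two frame strips, one corner glue, and one or two glues dedicated to the cooperative filler attachment, totalling the claimed $4$ glues.

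The main obstacle I anticipate is \textbf{controlling the filling process so that it is both fully connected and uniquely produced} while not blowing up the bin or glue count. The danger is that filler tiles or filler strips might attach in the wrong position or order, or that partial fillings in a bin could combine with each other spuriously rather than with the frame. The cooperative $\tau=2$ binding solves the uniqueness at the level of a single tile that sees two committed neighbors, but I must ensure the \emph{global} terminal assembly is unique: I would argue that growth can only proceed from the already-assembled corner outward, so the produced supertile in each bin is forced, and that the filter-on-split operation extracts exactly the intended partial square. A careful bookkeeping of which glues appear on the inner versus outer edges of the frame strips (so that the frame cannot self-attach in unintended ways, and so that only genuine filler supertiles can dock) is where the real work lies; verifying that $4$ glues, $14$ tiles, and $7$ bins actually suffice for all these distinct roles is a finite case check that I would carry out by enumerating the glue assignments along each cut, exactly as in the proof of Theorem~\ref{nxn-square}.
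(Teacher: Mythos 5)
Your first phase matches the paper exactly: build two perpendicular strips via the known $\mathcal{O}(\log n)$ strip construction (sharing the $7$ bins, since perpendicular non-rotatable strips cannot interact), join them with a single corner tile, and expose strength-$1$ glue on the inner edges. But then you abandon the correct filling mechanism for a flawed one. The paper fills the frame by mixing in \emph{one single filler tile type} with the same strength-$1$ glue on all four sides, in \emph{one final stage}: a lone filler tile can attach only where two orthogonal neighbors are already present (two strength-$1$ bonds $= \tau$), so growth is forced from the corner outward; two filler tiles floating in the bin share at most one edge (strength $1 < \tau$) and thus can never aggregate with each other; and once the square is full no further attachment is possible. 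Your worry that the ``global terminal assembly'' might not be unique is therefore unfounded --- cooperative attachment within a single bin already forces it --- and it led you to a row-by-row variant that actually breaks the theorem. Attaching pre-built $1\times(n-1)$ filler strips one row at a time is inherently sequential: row $i+1$ can only dock after row $i$ is in place, so you need $\Omega(n)$ mixing operations, i.e.\ $\Omega(n)$ stages; producing the strips in parallel does not help, because parallelism saves stages only for independent sub-assemblies, not for a forced attachment order. If instead you mix all identical row strips at once to stay logarithmic, uniqueness fails: identical rows stack on one another indefinitely (or not at all, depending on their north glue), and a strip whose long edge carries uniform strength-$1$ glue can bind at an offset, since any two coincident edges already supply strength $2$ --- exactly the offset ambiguity you were trying to avoid.

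There is also a smaller accounting problem: with distinct ``west'' and ``south'' filler glues plus a dedicated corner glue, your count is $3 + 1 + 2 = 6$ glues, not $4$. The paper's count of $4$ works only because the corner tile reuses the three strength-$2$ strip glues and the filler uses a \emph{single} additional strength-$1$ glue on all four sides; the count of $14$ tiles is then $6$ per strip, plus the corner tile, plus the one filler tile. So the fix is simply to return to the construction you described in your second paragraph's opening sentences and delete the row-by-row mechanism entirely.
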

\begin{proof}
{The construction is an easy result of combining \new{a} known construction for lines by staged assembly with filling in squares in the aTAM with temperature $\tau=2$,
as follows.}
First we construct the \xtimes{(n-1)} strips with strength-2 glues. We know from~\cite{DDF08}
that a strip can be constructed in $\mathcal{O}(\log
n)$ stages, three glues,  six  tiles and seven bins. Because both strips are perpendicular,
they do not connect. Therefore, we can use all seven bins to construct both strips 
in parallel. For each strip we use tiles such that the edge 
toward the interior of the square has a strength-1 glue. In the next stage we mix
the single corner tile with the two strips. Finally, we add a
tile type with strength-1 glues on all sides. When the square is filled, no
further tile can still connect, as $\tau=2$.

Overall, we need $\mathcal{O}(\log n)$ stages with four glues (three for the
construction, one for filling up the square), 14 tiles (six for each of the two strips, one for
the corner tile, one for filling up the square) and seven bins for the parallel
construction of the two strips.  
\end{proof}

\subsection{\boldmath \newtext{Polyominoes} with or without Holes, $\tau = 2$}\label{holeshape2}
Our method for assembling a polyomino $P$ at $\tau =2$ generalizes the 
approach for building a square that is described in Section~\ref{tsq}. The key idea is to scale $P$ by a factor
of 3, yielding $P^3$; for this we first build a frame called the \emph{backbone}, which is a \new{specific} spanning 
tree based on the union of all boundaries of $P^3$. This backbone is then filled up 
in a final stage by applying a more complex version of the flooding approach of Theorem~\ref{squareTheorem}. 
In particular, there is not only one flooding tile, but a constant set $S$ of 
such distinct tiles. 

\subsubsection{\boldmath Definition and construction of the backbone.}\label{defback} 
In the following, we consider a scaled copy $P^3$ of a polyomino $P$, constructed by replacing each
pixel by a $3\times 3$ square of pixels. We define the {\em backbone} of $P^3$ as follows; see
Figure~\ref{backboneC} for an illustration.

\begin{figure}[ht]
  \begin{center}
    \begin{tabular}{ccc}
      \includegraphics[height=4cm]{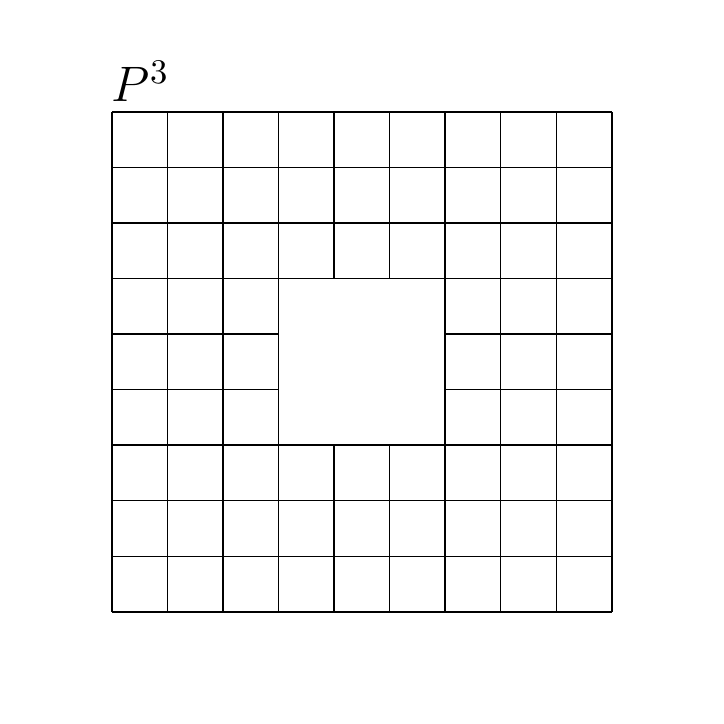} &
      \includegraphics[height=4cm]{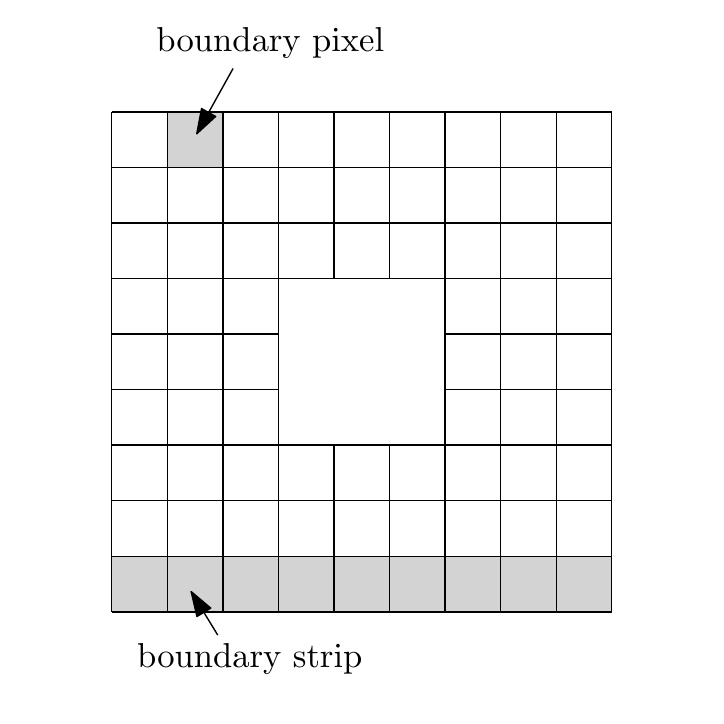}&
      \includegraphics[height=4cm]{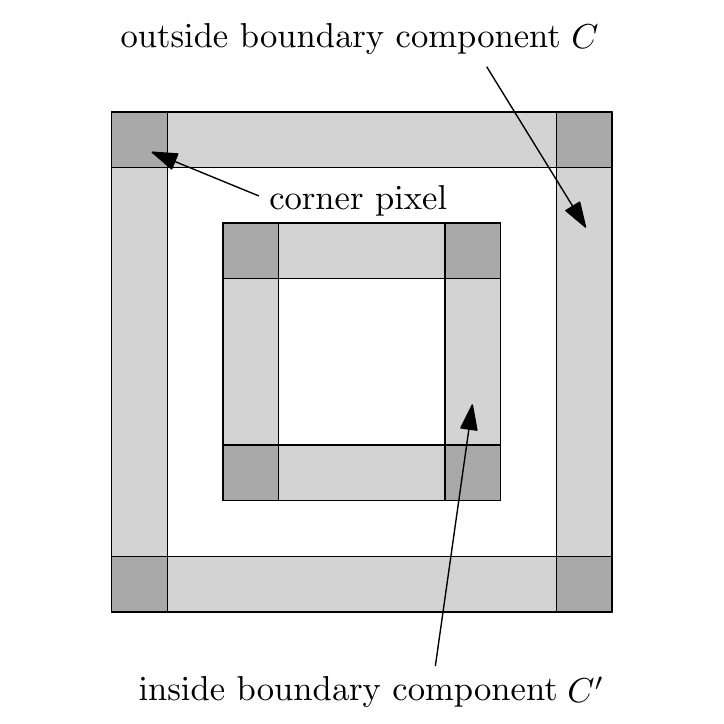}\\[2ex]
      {\small (a) 3-scaled polyomino $P^3$.} &      
      {\small (b) Boundary pixels and strips.} &
      {\small (c) Outside and inside}\\
      {\small } &      
      {\small } &
      {\small boundary components.}\\
      \includegraphics[height=4cm]{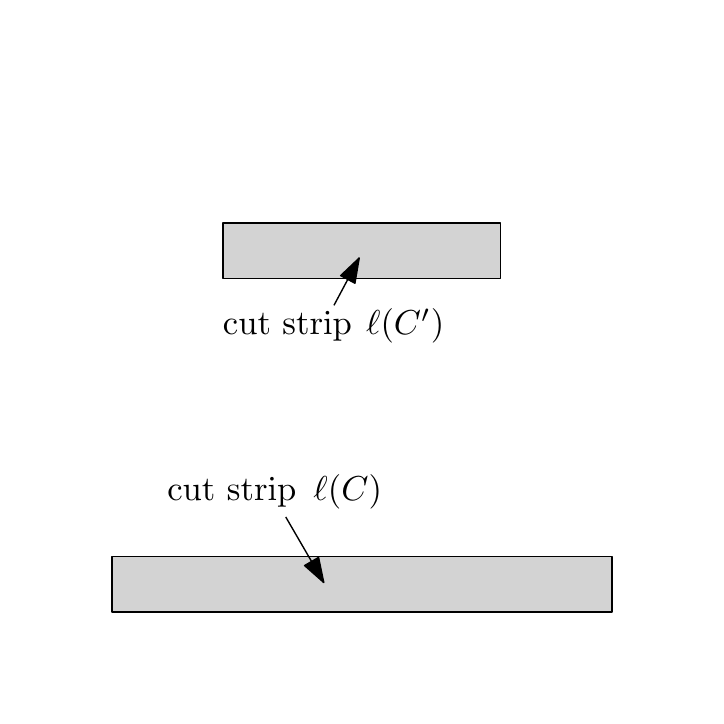} &
      \includegraphics[height=4cm]{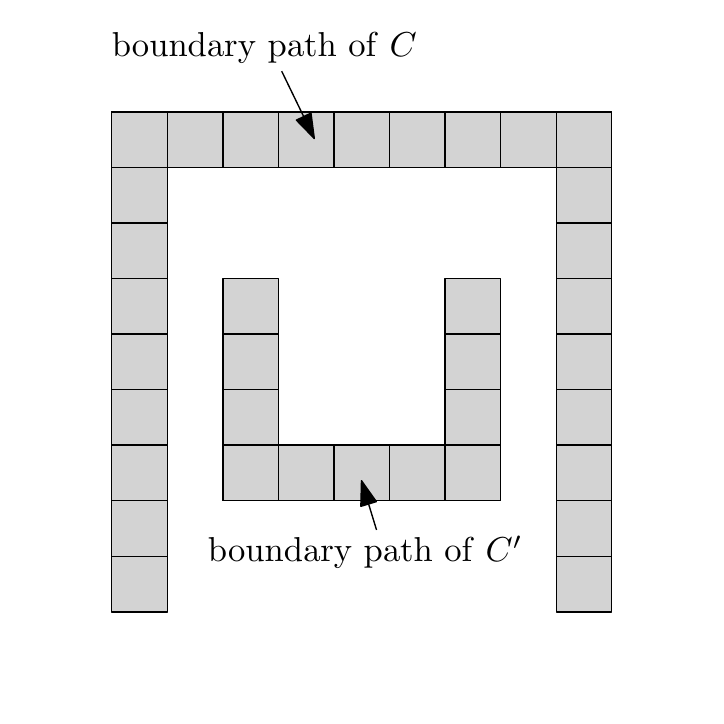}&
      \includegraphics[height=4cm]{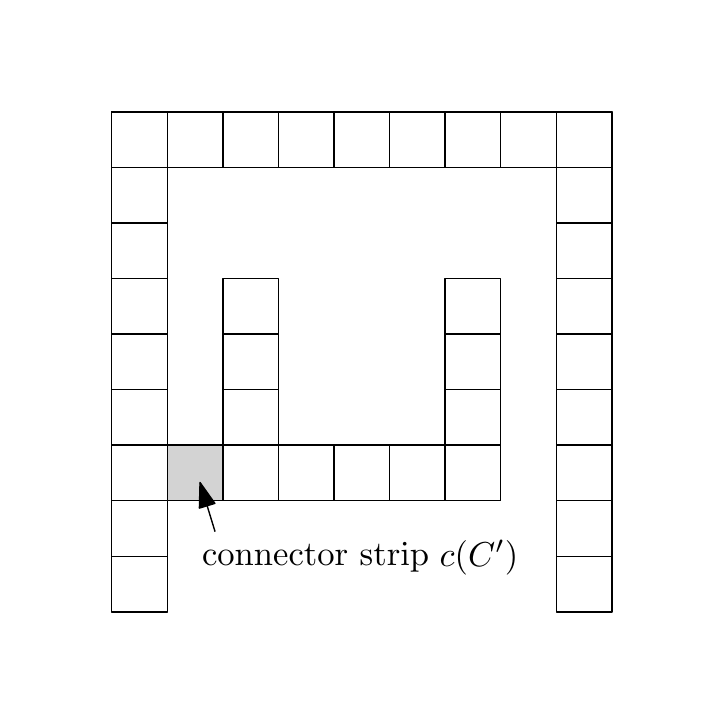}\\[2ex]
      {\small (d) Cut strip of $C$ and $C'$.} &      
      {\small (e) Boundary paths.} &
      {\small (f) Connector strip $c(C')$.}\\
    \end{tabular}
  \end{center}
  \caption{Stepwise construction of the backbone of a 3-scaled polyomino $P^3$.}
  \label{backboneC}
\end{figure}


\newtext{
\begin{definition}\label{def:backbone1}
A pixel of $P^3$ is a {\em boundary pixel} of $P^3$,
if one of the pixels in its eight (axis-parallel or diagonal) neighbor pixels does not
belong to $P^3$. A \emph{boundary strip} of $P^3$ is a maximal set of
boundary pixels that forms a contiguous (vertical or horizontal) strip, see Figure~\ref{backboneC}(b).
A {\em boundary component} $C$ is a maximal connected component of boundary pixels.
\end{definition}

Because of the scaling, an {\em inside boundary component} corresponds to precisely one inside
boundary of $P^3$ (delimiting a hole), while the {\em outside boundary component} corresponds to
the exterior boundary of $P^3$, see Figure~\ref{backboneC}(c). Furthermore, each boundary component $C$ has a unique
decomposition into boundary strips: a circular sequence of boundary strips that alternate between vertical
and horizontal, with consecutive strips sharing a single (``corner'') pixels.

\begin{definition}\label{def:backbone}
For an inside boundary component $C'$, its {\em cut strip} $\ell(C')$ is the leftmost of its topmost strips;
for the outside boundary component, its cut strip $\ell(C)$ is the leftmost of its bottommost strips, see Figure~\ref{backboneC}(d).

A {\em boundary path} of the outside boundary component $C$ consists of the union of all its strips,
with the exception of $\ell(C)$; for an inside boundary component $C'$, it consists of $C'\setminus\ell(C')$; see Figure~\ref{backboneC}(e).
Furthermore, the {\em connector strip} $c(C')$ for an inside boundary component $C'$ is the 
contiguous horizontal set of pixels of $P^3$ extending to the left from the leftmost bottommost
pixel of $C'$ and ending with the first encountered other boundary pixel of $P^3$; see Figure~\ref{backboneC}(f). {\newnewtext{Note that no tile of the boundary path is part of the connector strip.}}
Then the {\em backbone} of $P^3$ is the union of all boundary paths and the connector strips of inside boundary components.
\end{definition}
}

By construction, the backbone has a canonical decomposition into boundary strips and connector strips;
furthermore, a pixel in the backbone of $P^3$ \newnewtext{has degree three if and only if this pixel is adjacent to a pixel of a connector strip.}
For $h$ holes, only $2h$ pixels in the backbone \newnewtext{can have degree three.} 



Overall, this yields a hole-free shape that can be constructed efficiently.

\begin{lemma}\label{lem:backbone}
Let $k$ be the number of vertices of a $3$-scaled polyomino $P^3$. The corresponding backbone can be assembled in
$\mathcal{O}(\log^2 n)$ stages with \new{$3$} glues, $\mathcal{O}(1)$ tiles and
$\mathcal{O}(k)$ bins.  \end{lemma}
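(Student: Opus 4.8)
The plan is to reduce the backbone to its canonical pieces and then glue these together hierarchically. Recall from the paragraph preceding the lemma that the backbone of $P^3$ has a canonical decomposition into maximal straight strips: one boundary strip per edge of a boundary component, together with one connector strip per hole. Since a rectilinear polygon with $k$ vertices has $\mathcal{O}(k)$ edges and $P^3$ lives in a $3n\times 3n$ box, there are $\mathcal{O}(k)$ such strips, each a $1\times m$ line with $m=\mathcal{O}(n)$. First I would build every strip separately with the $1\times n$-strip construction of \cite{DDF08}, which needs $\mathcal{O}(\log n)$ stages, $3$ glues, $\mathcal{O}(1)$ tiles and a constant number of bins per strip. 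I would assign the strength-$2$ glues (as in Theorem~\ref{squareTheorem}) only to the two ends at which a strip must later join its neighbors, leaving the long sides null so that strips cannot stick side-to-side. Running all $\mathcal{O}(k)$ strip-assemblies in parallel, each into its own output bin, still costs only $\mathcal{O}(\log n)$ stages and $\mathcal{O}(k)$ bins.

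Next I would assemble each boundary path as a single bent line, or \emph{snake}. A boundary path is topologically a simple path of strips that alternate between horizontal and vertical and meet at corner pixels, so it has exactly two free (sticky) ends. Adding a constant number of corner tiles, the same length-doubling divide-and-conquer that builds a straight line also builds a snake: one combines two equal-length half-snakes while maintaining the invariant that the two free ends carry an ordered pair of glues from the $3$-letter alphabet, exactly as in the line argument. This preserves the $3$-glue and $\mathcal{O}(1)$-tile budgets and adds $\mathcal{O}(\log k)=\mathcal{O}(\log n)$ stages. It is here that the bin count genuinely rises to $\mathcal{O}(k)$: two half-snakes of the same length may have different corner patterns, so they cannot share a bin.

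The holes --- and with them the extra logarithmic factor --- enter in the final phase, where the hole boundary paths are attached to the outer boundary path through their connector strips at the $2h$ degree-three junctions. The difficulty is that only three glue types are available, so a large connected sub-backbone typically exposes many edges carrying the same glue; two such branchy pieces can therefore never be merged in a single mix without creating spurious bonds, which is precisely why the clean two-ends argument for snakes breaks down once the structure branches. To control this I would order the connectors by a hierarchical (heavy-path) decomposition of the backbone tree, so that any leaf-to-root chain of connectors crosses only $\mathcal{O}(\log k)=\mathcal{O}(\log n)$ levels, and process the levels from the leaves inward. Within one level the attachments are mutually independent and can be carried out in parallel; but to force each intended connector --- and only it --- to present the unique matching strength-$2$ glue, a level is realized by a short line-assembly gadget costing $\mathcal{O}(\log n)$ stages. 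Multiplying the $\mathcal{O}(\log n)$ levels by the $\mathcal{O}(\log n)$ cost per level gives the claimed $\mathcal{O}(\log^2 n)$ stages, while the glue, tile and bin counts stay at $3$, $\mathcal{O}(1)$ and $\mathcal{O}(k)$.

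The step I expect to be the main obstacle is this last one: proving that the hierarchical attachment is \emph{uniquely} produced --- that in every bin the terminal supertile is exactly the intended partial backbone and nothing else --- despite the heavy reuse of only three glues across a branching shape. The per-strip and per-snake constructions are direct adaptations of the line and square systems and carry no real surprises; the genuine work is verifying the uniqueness of the branching merges and checking that the decomposition has depth $\mathcal{O}(\log n)$ while using only $\mathcal{O}(k)$ bins. This is exactly the reason the bound is $\mathcal{O}(\log^2 n)$ rather than the $\mathcal{O}(\log n)$ one obtains for a single (hole-free) snake.
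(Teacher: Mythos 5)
Your proposal shares the paper's general flavor (build straight strips with the line construction, use balanced/median-style splitting for logarithmic depth, reuse three glues), but it leaves unproven exactly the step that the paper's proof is actually about, and you say so yourself: unique production when the \emph{branching} pieces are attached. The paper's mechanism for this is a strict invariant maintained by its three-level hierarchical decomposition: every subtree in the decomposition is separated from the rest of the backbone by \emph{at most two} pixels. This is achieved by splitting alternately at a \emph{tree median} (when the subtree has separation degree one) and at a \emph{path median} of the path joining the two separating vertices (when it has separation degree two), so that balance and bounded separation degree are preserved simultaneously. With that invariant, any merge in any bin involves one degree-three pixel and at most three components, hence at most five pending connections, of which at most four share an orientation; since tiles are non-rotatable, horizontal and vertical glues cannot interfere, and by attaching the components in two successive stages (so that all exposed copies of a glue are consumed before it is reused) three glues suffice for unique assembly. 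Your heavy-path alternative has no analogue of this invariant: a heavy path can have $\Theta(k)$ subtrees hanging off it, and attaching them ``in parallel within a level'' means mixing them into one bin with the path supertile, where with only three glues many attachment sites expose identical glues and spurious bonds are unavoidable; attaching them one by one instead costs $\Theta(k)$ stages. The ``short line-assembly gadget'' you invoke to force the unique matching glue is not specified, and it is precisely the missing argument.

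There is also a quantitative gap in your snake phase. Halving a bent boundary path by \emph{pixel count} produces, over the whole recursion, $\Theta(c\log n)$ distinct corner-containing pieces for a path with $c$ corners (at depth $d$ up to $\min(2^d,c)$ pieces contain a corner, summed over $\mathcal{O}(\log n)$ depths), and each such piece is a geometrically distinct supertile needing its own bin; summed over all boundary components this is $\mathcal{O}(k\log n)$ bins, not the claimed $\mathcal{O}(k)$. The paper avoids this by ordering its decomposition the other way around: first split at \emph{corner pixels of degree two}, using medians with respect to the count of corner pixels (so this level contributes only $\mathcal{O}(k)$ distinct pieces and $\mathcal{O}(\log k)$ depth), and only then decompose the resulting straight strips, which can share the constant-size pool of power-of-two segment bins plus one output bin per strip. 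If you replace pixel-count medians by corner-count medians in your second phase, and replace the heavy-path phase by the separation-degree argument above, your outline essentially becomes the paper's proof.
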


\begin{proof}
	The main idea is to give a \new{hierarchical tree decomposition} of the backbone \newtext{$T$} into
\new{subtrees, all the way down to single pixels. Assembling the backbone is then performed in a bottom-up fashion
from the tree decomposition. To this end, subtrees in the decomposition are split into smaller
pieces by the removal of appropriate pixels. An important invariant is that each subtree is separated from the
rest of the backbone by at most two pixels; when assembling the backbone in a bottom-up fashion
from the tree decomposition, this ensures that a constant number of glues at the separating 
pixels suffices for assembling the whole backbone.} 

\new{For decomposing the backbone, we observe that it consists of two types of components: strips and corner pixels;
see Figure~\ref{fig:backboneStripsandCorners}. By construction of the backbone, the degree of corner pixels is two or three, 
corresponding to the number of adjacent strips.}

\begin{figure}[h!]
\centering
\begin{subfigure}[t]{0.2\columnwidth}
\includegraphics[width = \columnwidth]{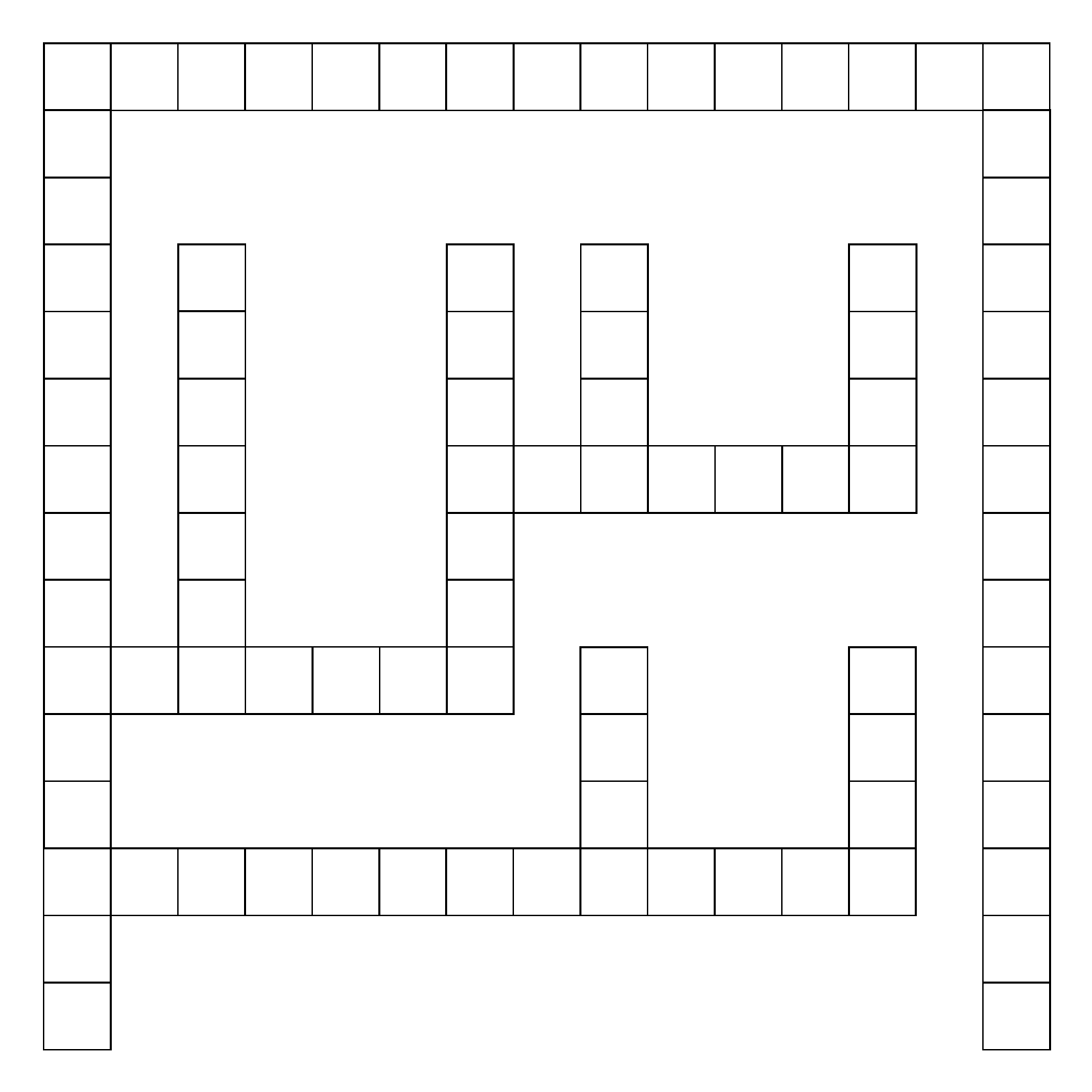}
\caption{\newnewtext{A backbone of a scaled polyomino $P^3$.}}
\end{subfigure}
\hfil
\begin{subfigure}[t]{0.2\columnwidth}
\includegraphics[width = \columnwidth]{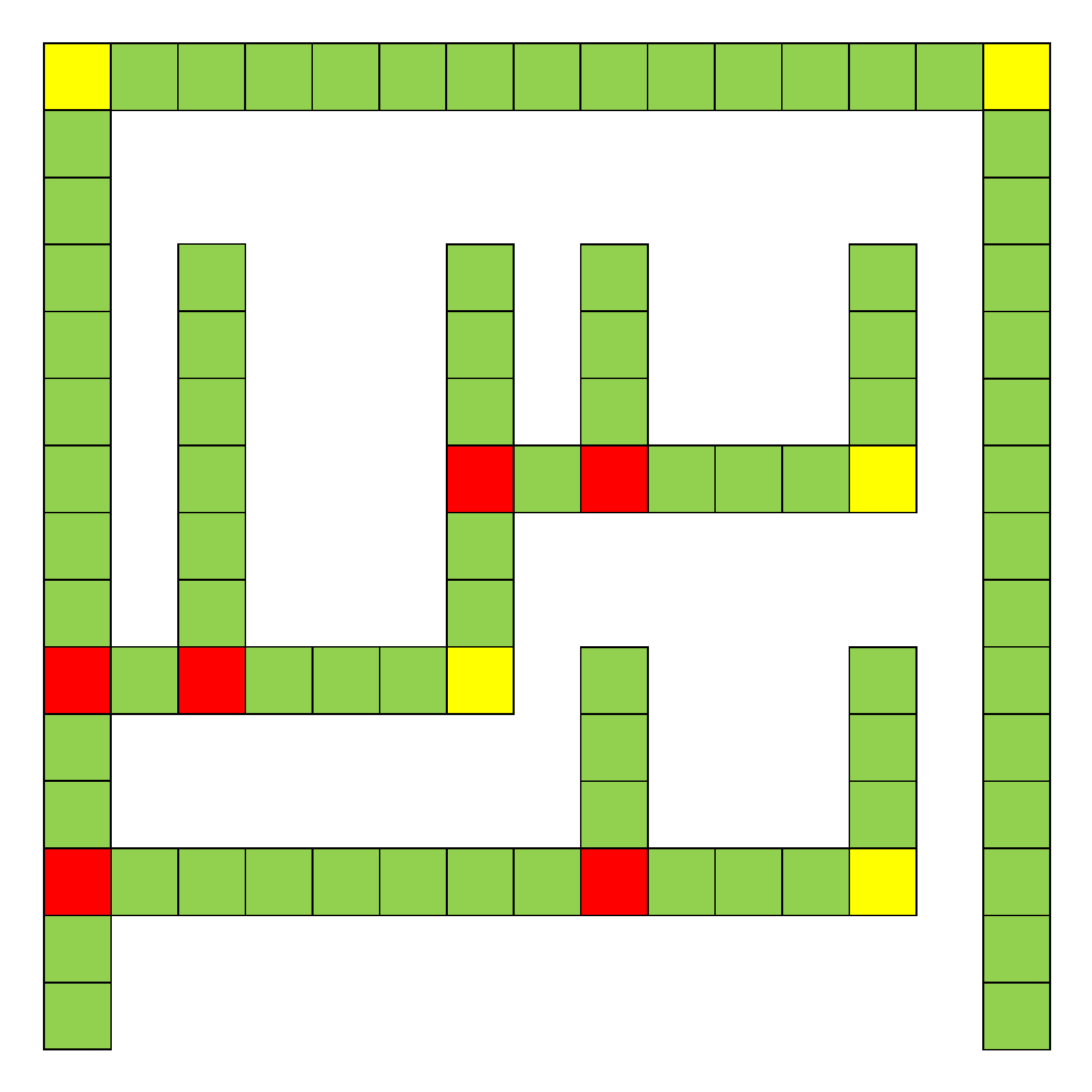}
\caption{\newnewtext{A backbone decomposed into strips (green) and corner pixels of degree two (yellow) and degree three (red).}}
\label{fig:backboneStripsandCorners}
\end{subfigure}
\hfil
\begin{subfigure}[t]{0.25\columnwidth}
	\includegraphics[width = \columnwidth]{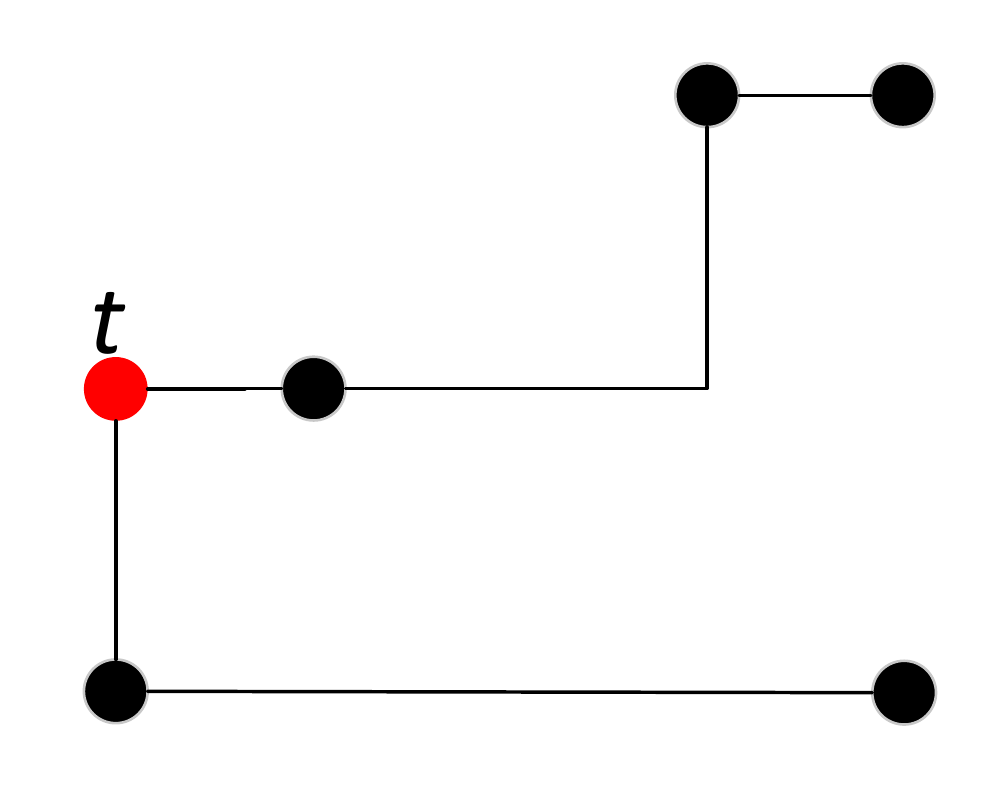}
	\caption{\newnewtext{\new{Tree $T_P$ of degree-three pixels in the backbone}. The red node is a tree median $t$ in $T_P$.}}
	\label{fig:backbonetree}
\end{subfigure}
\caption[Backbone types]{\newnewtext{Example of a backbone, its decomposition by corner pixels, the corresponding tree $T_P$ of degree-3 pixels.}}
\label{typeDec}
\end{figure}

\new{There are three levels of the decompostion, corresponding to splitting the tree by removal of {\bf (I)} degree-three
coner pixels (shown red in Figure~\ref{fig:backbonetree}), {\bf (II)} degree-two corner pixels (shown yellow in the figure) and
{\bf (III)} non-corner pixels (shown green in the figure). These splits are completely hierarchical: 
the level-I decomposition is carried out with respect to degree-three corner pixels until only pixels with degree two are 
left in all components, {as shown in Figure~\ref{typeDec}};
in level II, these components are further decomposed
with respect to the corner pixels of degree two, such that only strips
remain, i.e., subtrees without corner vertices that are shown green in Figure~\ref{typeDec}~(b).
At the third and lowest level, the straight strips are decomposed until just individual pixels are left.}

\new{There are two key properties of the decomposition: (A) bounded tree depth, which ensures a small number of stages,
and (B) bounded separation degree of subtrees, which ensures a small number of glues.
The key ingredient for (A), a polylogarithmic recursion depth for all three decomposition levels,
is that the splitting pixels are chosen such that the sizes of the split components are balanced. In
particular, we ensure that the size of components
is at most half of the size of the original component after at most two splits. This can be obtained
by choosing among the pixels of the appropriate decomposition type one that is a
(tree or path) median of a remaining backbone piece, i.e., a pixel whose removal leaves each connected component with at most half the number of vertices 
of the original tree.
Performing this splitting operation recursively yields a recursion depth of at most $\mathcal{O}(\log k)$.
In order to achieve (B), each subtree is separated from the rest of the backbone by the removal of at most two 
pixels from the rest of the backbone, special care is only necessary at level I, as subtrees in levels II and III
do not have any vertices of degree higher than two; at level I, the property is achieved by a further subdivision
into level I(a) (decomposition by removing a subtree median) and level I(b) (decomposition by removing a subpath median).
Further details are described below.}

\new{\noinbf{Level I decomposition.}}
\new{Consider a tree $T_P$ whose vertices are the pixels of degree three in the backbone;
two vertices are adjacent if their degree-three pixels can be connected by a path in the backbone
that does not contain any other degree-three pixel (see Figure \ref{fig:backbonetree}).}
Because the backbone is hole-free, $T_P$ is a tree with maximum degree three.
We decompose $T_P$ recursively as follows. Initially, we
choose a vertex $t$ (see Figure \ref{fig:backbonetree}) that splits \new{$T_P$} into connected
components that \new{each} have at most half the number of nodes, i.e., a {\em tree median} resulting
in subtrees of sizes at most $\lceil |T_P|/2 \rceil$. \new{The further decomposition 
of a nontrivial subtree $T'_P$ of $T_P$ depends on its \newest{{\em separation degree}, which is the number of degree-three pixels
that separate $T'_P$ from the rest of $T_P$.}}

\new{\noinbf{Level I(a) decomposition.}
If $T'_P$ has separation degree one, i.e., it is separated from the rest of $T_P$ by a single degree-three vertex,
we split $T'_P$ into further pieces by removing a tree median of $T'_P$; see Figure~\ref{fig:newpath}~(a). 
The resulting subtrees have separation degree one or two, and each piece has size at most $\lceil |T'_P|/2 \rceil$.}

\new{\noinbf{Level I(b) decomposition.}
If $T'_P$ has separation degree two, it is separated from the rest of $T_P$ by two degree-three vertices, say,
$v_1$ and $v_2$; see Figure~\ref{fig:newpath}~(b). 
If the path between $v_1$ and $v_2$ is a single edge in $T_P$, $T'_P$ does not contain
any further vertices, and we can proceed to level II.
If there is a nontrivial path $W$ in $T'_P$ between $v_1$ and $v_2$, we split $T'_P$ by 
picking a path median of $W$. This results in three new subtrees; two of them (say, ${T'_P}^{(1)}$ and ${T'_P}^{(2)}$) have separation degree two,
one (say, ${T'_P}^{(3)}$) has separation degree 1. Clearly, ${T'_P}^{(i)}\leq \lceil |T'_P|/2 \rceil$ for $i=1,2$. As ${T'_P}^{(3)}$ has separation 
degree 1, its next decomposition will be level I(a), ensuring that its components will have size at most $\lceil |T'_P|/2 \rceil$
after this second split.}


\begin{figure}[h!]
	\centering
	\begin{subfigure}[t]{0.42\columnwidth}
		\includegraphics[width = \columnwidth]{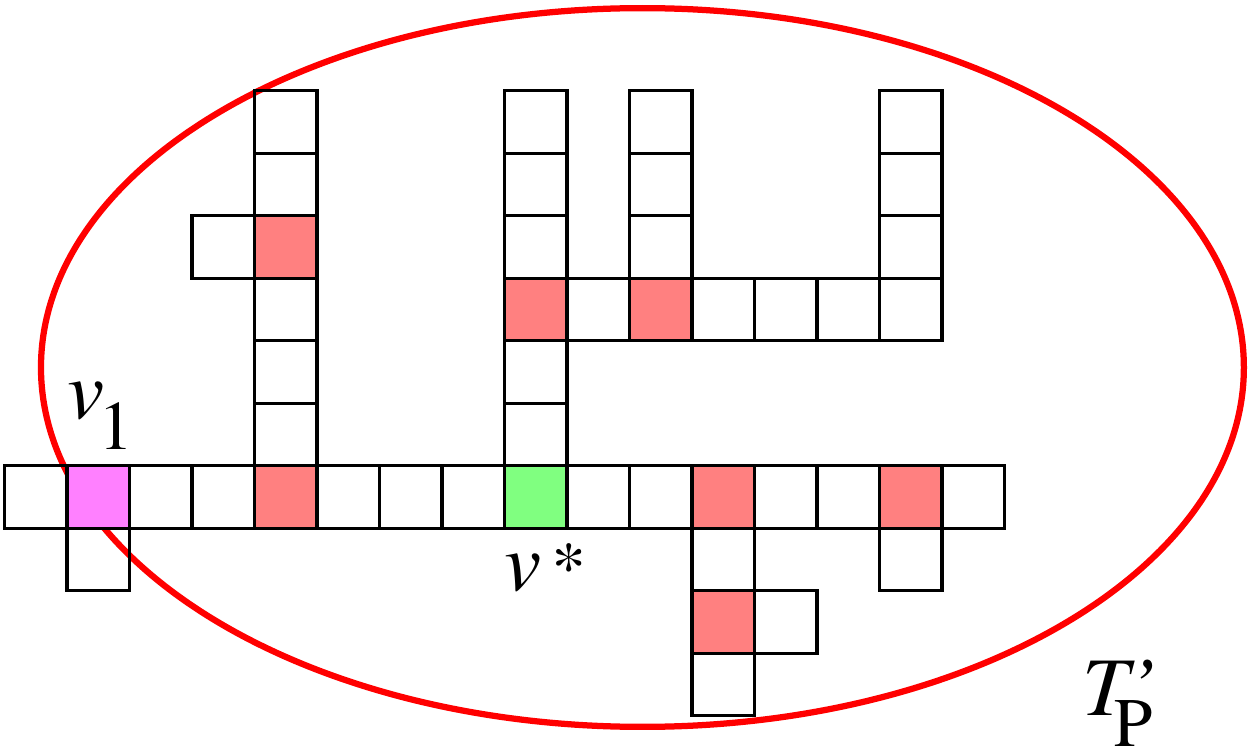}
		\caption{\newnewtext{Decomposition of level I(a): Subtree $T'_P$ has one separation vertex ($v_1$, shown in purple). The next decomposition step
is performed by splitting at the tree median, $v^*$.}}
	\end{subfigure}
	\hfil
	\begin{subfigure}[t]{0.37\columnwidth}
		\includegraphics[width = \columnwidth]{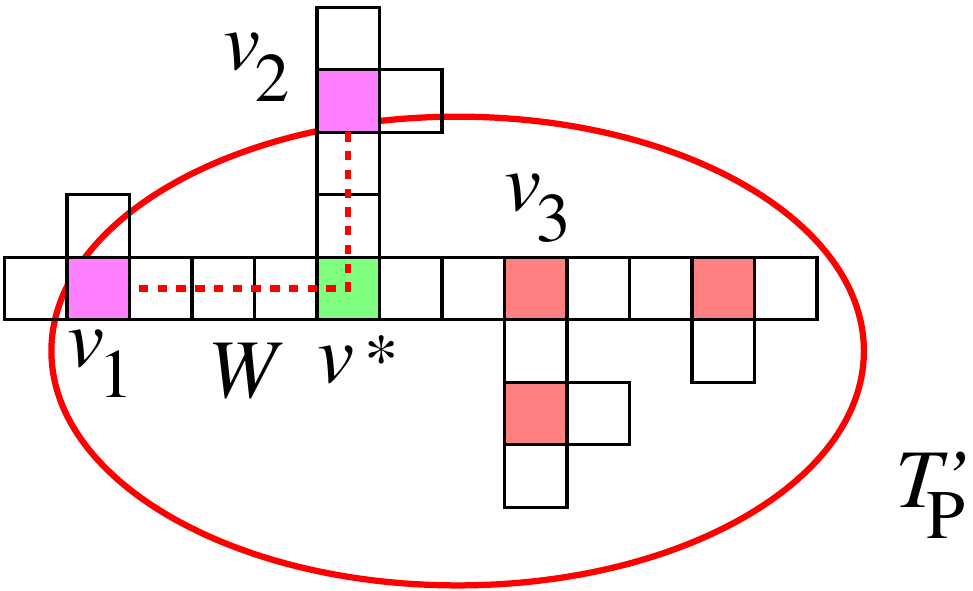}
		\caption{\newnewtext{Decomposition of level I(b): Subtree $T'_P$ has two separation vertices ($v_1$ and $v_2$, shown in purple), connected by path $W$. 
The next decomposition step is performed by splitting at the path median, $v^*$, instead of the tree median \newest{$v_3$}.}}
	\end{subfigure}
	\caption[Level I]{\newnewtext{Decomposition step of level I.}}
		\label{fig:newpath}
\end{figure}

	By induction it follows that this recursion has a depth of
$\mathcal{O}(\log |T_P|) = \mathcal{O}(\log k) \subseteq \mathcal{O}(\log n)$.
Because each node has a degree of $3$, it follows that the width of the
recursion tree is $\mathcal{O}(k)$, so a bin complexity of
$\mathcal{O}(k)$ is guaranteed for assembling all level-I components. 

\new{Assembling the subtrees of the level-I decomposition can be ensured with just three glue types, as follows;
see Figure~\ref{fig:3glues}. A degree-3 pixel $p$ is adjacent to three level-I components, say, $T_1$, $T_2$, $T_3$; 
at most two of them ($T_1$ and $T_2$) have separation degree two, so at most five connections are involved when attaching the components to $p$.
Because supertiles are not rotatable, we can separate the consideration
for horizontal connections from the one for vertical connections. At most 
four of the five connections are in the same orientation, and only if these belong to the \newest{components}
with separation degree two, as shown in Figure~\ref{fig:3glues}. 
By attaching component $T_1$ in one stage, we can also use the involved glue type $g_1$ for the 
connection of component $T_2$ that is not adjacent to $p$ in a separate, second stage, when there are no more exposed
glues of type $g_1$ at $T_1$ or $p$. 
Thus, three glue types are sufficient to ensure a unique assembly.

\begin{figure}[h!]
\centering
		\includegraphics[width = .5\columnwidth]{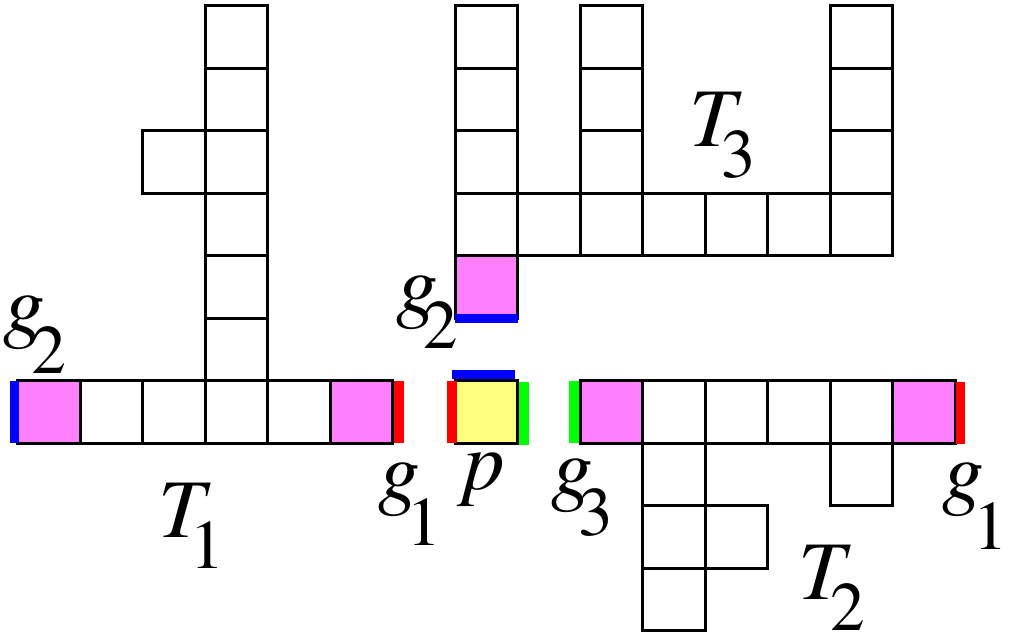}
	\caption[3glues]{\new{Assembling level-I components with three glues. In a first stage, $T_1$ gets assembled
with pixel $p$ using glue $g_1$, leaving no exposed connections with glue type $g_1$ at $p$ or $T_1$. In a second stage, $T_2$ gets attached to $p$
using glue $g_3$. Because tiles and supertiles are non-rotatable, using glue type $g_2$ for T$_3$ does not interfere with the horizontal connections.}}
		\label{fig:3glues}
\end{figure}
}

\new{\noinbf{Level II.} At the second level, we use corner pixels
of degree two for further tree decomposition.} 
We remove a pixel $t$ \newtext{of degree two} that separates $P_{W'}$ into two subpolyominoes that both have \new{almost} the same number of corner pixels of degree two. Thus, the decomposition tree has
again logarithmic height: now a node represents a corner pixel of degree $2$, while 
an edge \new{corresponding to} a straight line between two of them. By the same argument as above,
we obtain \new{a stage complexity of} $\mathcal{O}(\log k)$ and
\new{a bin complexity of} $\mathcal{O}(k)$ for all iterations of the second step.

\new{As we use a new bin, we are allowed to use the same glues as for putting together the level-I pieces.
None of these assembly steps is more complex than for level I, so 
we again conclude that \new{three} glues suffice.}


\new{\noinbf{Level III.}}
We assemble the straight lines between connection pixels of degree~$2$. For each strip, we reuse three glue types from the second step. For
every type of strip (as defined above) we take six bins to build 
\xtimes{2^\ell} strips for some $\ell \in \mathbb{N}$. To build a specific strip, we proceed as
follows. Let $m$ be the length of the strip; we build such a strip 
in one additional bin as in~\cite{DDF08}.
The individual segments for the strip are used from the bins for building strip types. Thus, we need
$\mathcal{O}(\log m)$ stages for one strip; due to parallelism, this takes $\mathcal{O}(\log n)$ stages 
for assembling all straight strips.\\

For the overall backbone assembly, we use \new{three} glues, $\mathcal{O}(1)$ tiles and $\mathcal{O}(k)$ bins 
within $\mathcal{O}(\log n\cdot\log k)$ stages: we split at tree medians $\mathcal{O}(\log k)$ times, and use $\mathcal{O}(\log n)$ stages for each strip. 
\end{proof}

\new{Now we can establish our main result for $\tau = 2$.
The main idea is to construct the backbone 
using strength 2 for each glue in the construction of Lemma~\ref{lem:backbone}
and then flooding it by a specifically designed set of tiles $S$ \new{with glues of strength 1}, which is
illustrated in Figure~\ref{gluechart}; see Figure~\ref{backboneglue} for the assembly process of flooding.  
These additional tiles are attached in a cooperative
manner, i.e., by making use of two strength-1 glues for each attachment. 
Similar to the construction of Theorem~\ref{squareTheorem},
this ensures that precisely the pixels of the original polyomino are filled in.}


\begin{theorem}
\label{th:hole2}
Let $P$ be an arbitrary polyomino with $k$ vertices. Then there is a $\tau=2$
staged assembly system that constructs a fully connected version of $P$
in $\mathcal{O}(\log^2 n)$ stages, with \new{$6$} glues, $\mathcal{O}(1)$ tiles,
$\mathcal{O}(k)$ bins and scale factor $3$.  
\end{theorem}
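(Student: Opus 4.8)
The plan is to combine the two ingredients already in hand: the backbone of Lemma~\ref{lem:backbone} as a rigid frame, and a cooperative flooding step in the style of Theorem~\ref{squareTheorem} to fill in the interior. First I would scale $P$ by a factor of $3$ to obtain $P^3$ and assemble its backbone exactly as in Lemma~\ref{lem:backbone}, but assigning strength $2$ to each of the $3$ glues used there. Since at $\tau=2$ a single strength-$2$ glue already suffices for attachment, the backbone is produced by the identical hierarchical tree decomposition, hence in $\mathcal{O}(\log^2 n)$ stages with $3$ glues, $\mathcal{O}(1)$ tiles and $\mathcal{O}(k)$ bins, and it is fully connected. The interior-facing edges of the backbone tiles are equipped with strength-$1$ glues drawn from the flooding alphabet, so that flooding tiles can later dock onto the frame.

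In one final stage I would merge the backbone bin with a bin holding the constant set $S$ of flooding tiles, whose glues all have strength $1$ and are wired according to the glue chart of Figure~\ref{gluechart}. Because $\tau=2$, a flooding tile can attach only when two of its edges simultaneously match two already-present neighbors; a single strength-$1$ glue never suffices, and this cooperative binding is what turns the interior into a controlled flood. Seeding happens at the concave corners of the frame, where an interior pixel is orthogonally adjacent to two backbone tiles whose exposed strength-$1$ labels match a seed tile of $S$; from there the flood front advances through the interior of $P^3$, each new tile supported by one frame (or previously placed flooding) tile plus one lateral neighbor. I would argue by induction on the flood front that exactly the pixels of $P^3$ are filled. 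No tile escapes the frame: along a straight frame segment an exterior position has only a single orthogonal backbone neighbor, so the second glue demanded at $\tau=2$ is missing; at the more delicate spots --- reflex corners, hole boundaries, and the ends of connector strips --- the glue chart labels the interior-facing edges so that no matching pair of glues is ever presented on the exterior side, so no cooperative attachment is possible there either. Conversely no interior pixel is skipped, because $P^3$ is connected and the scale-$3$ block structure guarantees that every interior position eventually acquires two supported neighbors. Once all interior pixels are occupied, every remaining exposed edge faces either the empty exterior or a non-matching glue, so no further attachment is possible and the terminal supertile is unique --- precisely the fully connected copy of $P^3$. This is where the scale factor $3$ is essential: the $3\times 3$ block replacing each original pixel gives the flood front room to always find the two supporting neighbors that $\tau=2$ binding requires, and it lets the finitely many tile types of $S$ cover all local configurations arising at reflex vertices, holes, and connector strips (cf.\ Figure~\ref{backboneglue}).

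Counting resources gives the claimed bounds. The backbone dominates the stage complexity with $\mathcal{O}(\log^2 n)$ stages, and the single flooding stage adds $\mathcal{O}(1)$, so the total is $\mathcal{O}(\log^2 n)$. The $3$ strength-$2$ backbone glues together with the $3$ strength-$1$ flooding glues give $6$ glues in all; the tile set is the constant backbone tile set plus $|S|=\mathcal{O}(1)$ flooding tiles; and the bin complexity is $\mathcal{O}(k)$ from the backbone plus one bin for $S$, i.e.\ $\mathcal{O}(k)$. Full connectivity holds because the backbone is joined through strength-$2$ glues and every flooding tile is joined to each of its neighbors by a positive-strength (strength-$1$) glue.

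The main obstacle is the correctness of the flooding: proving simultaneously that the flood fills \emph{every} pixel of $P^3$ and \emph{no} pixel outside it, and that the terminal supertile is unique. The delicate points are the behavior of the flood at reflex corners, along connector strips, and around holes, where a naive single flooding tile (as in Theorem~\ref{squareTheorem}) would either leak past the one-pixel-thick frame or stall. Resolving this is exactly the role of the designed tile set $S$ and its glue chart: a case analysis over the finitely many local neighborhood types occurring in a $3$-scaled polyomino shows that the strength-$1$ labels force the flood to propagate through interior blocks while denying any two-glue attachment outside the backbone, which yields both exact filling and uniqueness of the terminal assembly.
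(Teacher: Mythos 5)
Your proposal is correct and follows essentially the same route as the paper: scale by 3, build the backbone of Lemma~\ref{lem:backbone} with strength-2 glues, then merge in the nine position-keyed flooding tiles of Figure~\ref{gluechart} so that cooperative $\tau=2$ attachments fill exactly $P^3\setminus B(P^3)$, with the same accounting of $6$ glues, $\mathcal{O}(1)$ tiles, $\mathcal{O}(k)$ bins and $\mathcal{O}(\log^2 n)$ stages. The correctness of the flood, which you defer to a case analysis, is settled in the paper by the explicit modulo-3 coordinate correspondence between the nine tile types and the glues $g_4,g_5,g_6$ placed on the backbone's inside edges, which is precisely the mechanism your glue-chart argument gestures at.
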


\begin{figure}[h!]
\centering
\includegraphics[width = 0.25\columnwidth]{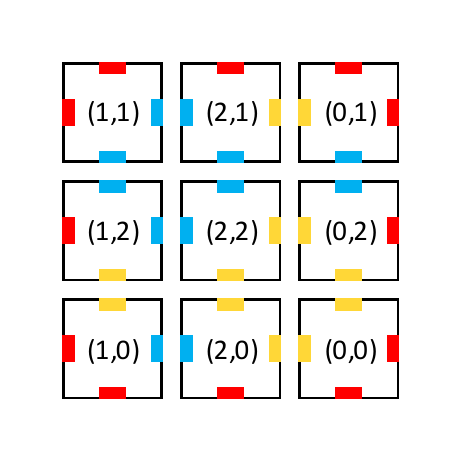}
\caption[$3\times 3$ glue chart]{Glue chart for \xtimes[3]{3} tiles for filling
up the shape. \new{The set $S$ of {\em flooding tiles} consists of the nine tiles shown in the figure}. Blue glue $\stackrel{\wedge}{=} g_4$, yellow glue
$\stackrel{\wedge}{=} g_5$ and red glue $\stackrel{\wedge}{=} g_6$.}
\label{gluechart}
\end{figure} 

\begin{figure}[h!] 
\centering
\begin{minipage}{0.3\columnwidth}
\includegraphics[width = \columnwidth]{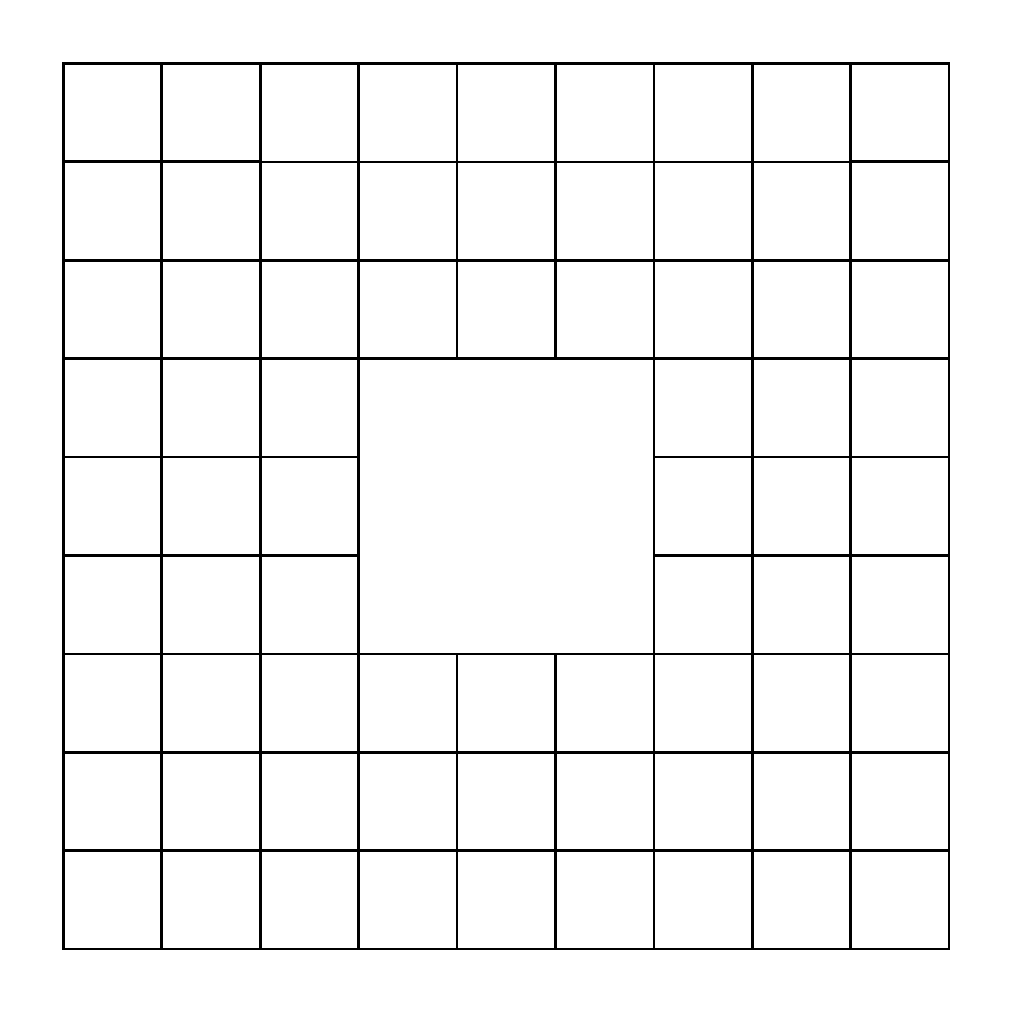}
\label{backboneAA}
\end{minipage}
\hfil
\begin{minipage}{0.36\columnwidth}
\includegraphics[width = \columnwidth]{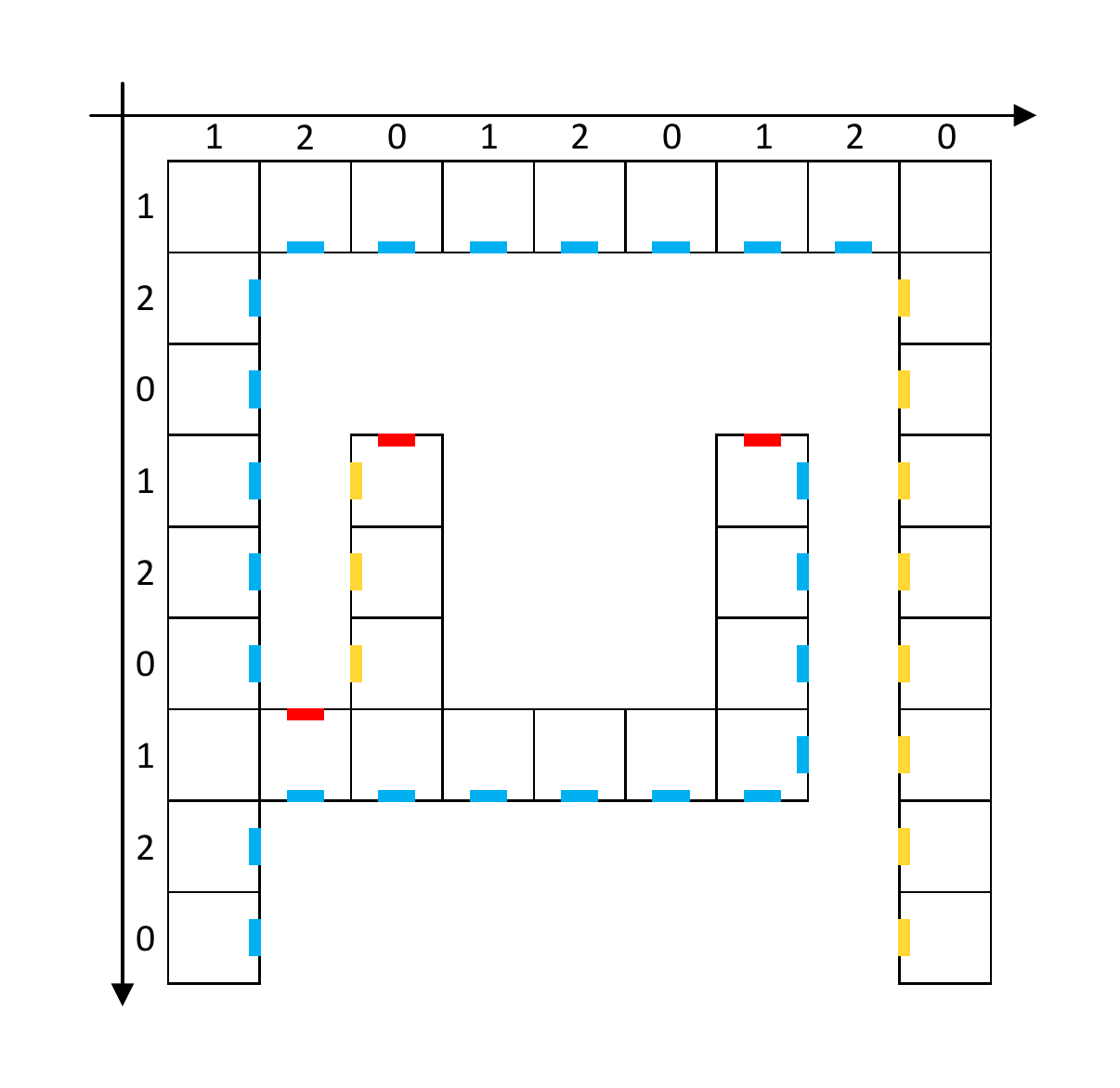}
\label{backboneD}
\end{minipage}
\hfil
\begin{minipage}{0.27\columnwidth}
\includegraphics[width = \columnwidth]{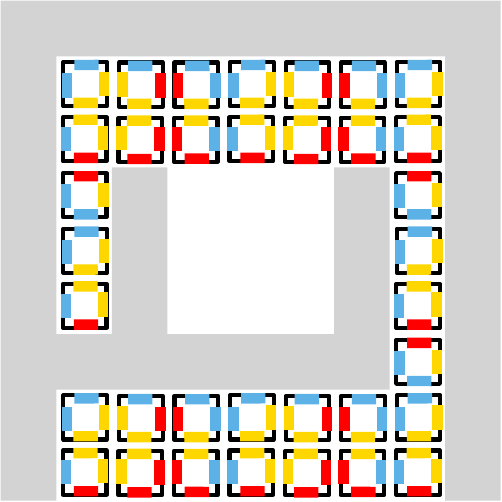}
\label{backboneCC}
\end{minipage}
\caption[Glues of a backbone]{(Left) \new{A polyomino $P^3$, obtained by scaling a polyomino $P$ with one hole by a factor of 3. (Middle)
The backbone of $P^3$ with strength-1 glues assigned to the inside edges; note the coordinates modulo 3
and their correspondence to the glue chart.
(Right) Filling up the backbone (shaded in grey) with the flooding tiles to assemble the polyomino.}}
\label{backboneglue}
\end{figure}

\begin{proof}
	\new{We assemble a polyomino $P^3$, obtained by scaling a polyomino $P$
by a factor of 3. Because of the scaling, all relevant coordinates of $P^3$ are multiples of 3.
After assembling the backbone $B(P^3)$ of $P^3$ (according to Lemma~\ref{lem:backbone}, making use of strength-2 glues $g_1$, $g_2$ and $g_3$),
we consider the {\em inside edges} of $B(P^3)$, which are the edges of pixels of $B(P^3)$
that are incident to pixels in $P^3\setminus B(P^3)$.
Now the following is straightforward to verify from the construction of the backbone; refer to Figure~\ref{backboneglue}~(Middle).

\begin{itemize} 
\item All inside edges facing east separate pixels at $x$-coordinates 1 and 2 modulo 3, with 
the backbone pixel at 1 modulo 3.
\item All inside edges facing south separate pixels at $y$-coordinates 1 and 2 modulo 3, with 
the backbone pixel at 1 modulo 3.
\item All inside edges facing west separate pixels at $x$-coordinates 2 and 0 modulo 3, with 
the backbone pixel at 0 modulo 3.
\item All inside edges facing north separate pixels at $y$-coordinates 0 and 1 modulo 3, with 
the backbone pixel at 1 modulo 3.
\end{itemize}

Now we specify the set of
flooding tiles $S$; the idea is similar to the flooding tiles of Rothemund
and Winfree~\cite{Rot00} described in \newest{our} Theorem~\ref{th:one}.  The mechanism is illustrated in
Figure~\ref{gluechart} and Figure~\ref{backboneglue}. }

	\new{$S$ consists of the set of nine tile types, shown in
Figure~\ref{gluechart}. In addition to the three strength-2 glues used for assembling the backbone,
we use three additional strength-1 glue types ($g_4$, $g_5$, $g_6$) that are assigned to the edges
of tiles from $S$. Note the assignment modulo 3, corresponding to $x$- and $y$-coordinates as follows. 

\begin{itemize}
\item A pixel of type $(1,1)$ gets glue type $g_4$ (east), $g_4$ (south), $g_6$ (west), $g_6$ (north).
\item A pixel of type $(1,2)$ gets glue type $g_4$ (east), $g_5$ (south), $g_6$ (west), $g_4$ (north).
\item A pixel of type $(1,0)$ gets glue type $g_4$ (east), $g_6$ (south), $g_6$ (west), $g_5$ (north).
\item A pixel of type $(2,1)$ gets glue type $g_5$ (east), $g_4$ (south), $g_4$ (west), $g_6$ (north).
\item A pixel of type $(2,2)$ gets glue type $g_5$ (east), $g_5$ (south), $g_4$ (west), $g_4$ (north).
\item A pixel of type $(2,0)$ gets glue type $g_5$ (east), $g_6$ (south), $g_4$ (west), $g_5$ (north).
\item A pixel of type $(0,1)$ gets glue type $g_6$ (east), $g_4$ (south), $g_5$ (west), $g_6$ (north).
\item A pixel of type $(0,2)$ gets glue type $g_6$ (east), $g_5$ (south), $g_5$ (west), $g_4$ (north).
\item A pixel of type $(0,0)$ gets glue type $g_6$ (east), $g_6$ (south), $g_5$ (west), $g_5$ (north).
\end{itemize}

Furthermore, pixels in the backbone get glues assigned to their inside edges, as follows.

\begin{itemize} 
\item All inside edges facing east or south get strength-1 glue $g_4$.
\item All inside edges facing west get strength-1 glue $g_5$.
\item All inside edges facing north get strength-1 glue $g_6$.
\end{itemize}

Note that this assignment of $g_4$, $g_5$, $g_6$ to backbone pixels already happens when assembling the backbone, as described
in Lemma~\ref{lem:backbone}; because these glues only face east/south, west, and north, respectively, and 
tiles and supertiles cannot be rotated, no bonding with these glues is possible between backbone pixels.

For filling the polyomino, we mix the nine kinds of flooding pixels (as shown in Figure \ref{gluechart}) with the backbone supertiles in one bin. 
Now it is straightfoward to verify by induction that every pixel in $P^3\setminus B(P^3)$ will get attached
to the backbone by a cooperative sequence of assembly steps that each use two strenth-1 bonds; this is
completely analogous to the technique described in \newest{our} Theorem~\ref{th:one}, with glues $g_4$, $g_5$ and $g_6$ being
compatible at each step because of their coordinates modulo 3. Also analogous to the construction of
squares in \newest{our} Theorem~\ref{th:one}, the converse holds: A pixel $p=(p_x,p_y)\not\in B(P^3)$ can 
only be attached in this manner if first
backbone pixels (say, $q$ and $r$) encountered when traveling from $p$ in two axis-parallel direction are met at inside edges,
i.e., only when $p$ belongs to $P^3$.
(See Figure~\ref{fig:monotone}.) This property only holds for pixels in $P^3$, implying
that precisely $P^3$ gets assembled.}

\begin{figure}[h!]
\centering
\includegraphics[width = 0.25\columnwidth]{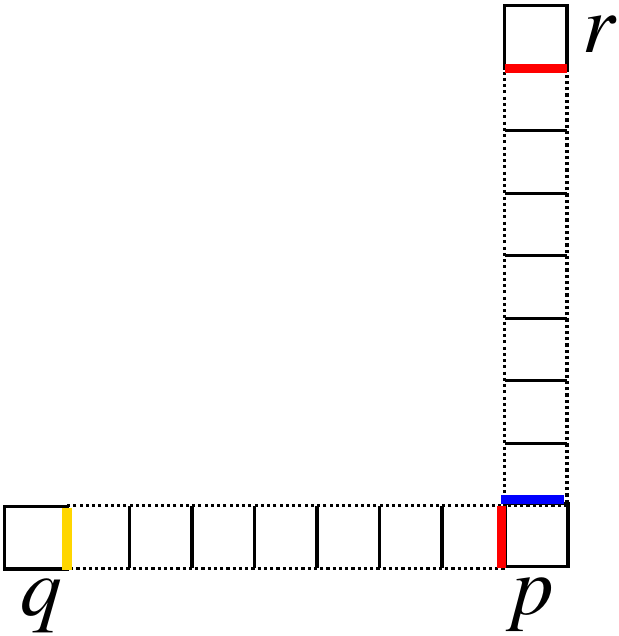}
\caption[Not in pixels]{\new{Attaching a pixel $p\not\in B(P^3)$ to the backbone by a sequence of flooding tiles is only possible if the first 
backbone pixels encountered when traveling from $p$ in two axis-parallel direction are met at inside edges,
i.e., only when $p$ belongs to $P^3$.}}
\label{fig:monotone}
\end{figure} 

\new{Overall, we have three glue types for building the backbone and three glue types for
the flooding tiles for building the interior of the polyomino. Hence, we use a total of six glue types and $\mathcal{O}(1)$ tile types.}

In total, we need $\mathcal{O}(\log^2 n)$ stages, \new{six} glues, $\mathcal{O}(1)$
tiles and $\mathcal{O}(k)$ bins to assemble a fully connected polyomino, scaled
by a factor 3 from the target shape. 
\end{proof}

As noted before, the number of degree-3 corner pixels depends on the number of holes. 
We can describe the overall complexity in terms of $h$, the number of holes.
For the special case of hole-free shapes, we can skip some steps, reducing
the necessary number of stages. {In particular, Corollary~\ref{cor:stage} follows from Theorem~\ref{th:hole2}.}

\begin{corollary}\label{cor:stage}
The stage complexity of Theorem \ref{th:hole2} can be quantified in the the number of holes $h$ such we get a stage complexity of \new{$\mathcal{O}(\log^2\, h + \log\, n)$}.
In particular, Theorem \ref{th:hole2} gives a staged self-assembly system for
hole-free shapes with \new{$\mathcal{O}(\log\, n)$} stages, \new{six} glues,
\new{$\mathcal{O}(1)$} tiles, \new{$\mathcal{O}(k)$} bins and a scale factor of
3.  
\end{corollary}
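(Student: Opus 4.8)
The plan is to reopen the stage count of Theorem~\ref{th:hole2} (equivalently, of Lemma~\ref{lem:backbone}) and re-derive it while tracking the three decomposition levels separately, replacing the pessimistic product $\mathcal{O}(\log n\cdot\log k)$ by a bound that isolates the dependence on the number of holes. The single structural fact that drives everything is already recorded just before Lemma~\ref{lem:backbone}: a backbone pixel has degree three exactly when it is adjacent to a connector strip, and there are at most $2h$ such pixels. Hence the tree $T_P$ of degree-three pixels used in the level-I decomposition has $|T_P|=\mathcal{O}(h)$ nodes, so its median recursion has depth $\mathcal{O}(\log h)$ rather than $\mathcal{O}(\log k)$.

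First I would separate the length-dependent work from the topology-dependent work. The only place where the cost of a single (re)assembly round grows with $n$ is level III, the assembly of the individual straight strips by doubling; this is a one-time, fully parallel phase that costs $\mathcal{O}(\log n)$ stages regardless of the number of holes, and it is exactly the phase that survives in the hole-free case. Level II assembles each boundary path from its strips by merging at degree-two corners; a median decomposition on the $\mathcal{O}(k)$ degree-two corners has depth $\mathcal{O}(\log k)=\mathcal{O}(\log n)$, and---crucially---each such merge only attaches two already-built strips at a corner, so it adds $\mathcal{O}(1)$ stages per level and does not trigger a fresh strip build. Thus levels II and III together contribute an additive $\mathcal{O}(\log n)$, matching the hole-free bound of the corollary.

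Next I would show that the only genuinely multiplicative contribution comes from level I, and that both of its factors are $\mathcal{O}(\log h)$. The recursion depth is $\mathcal{O}(\log h)$ by the degree-three count above. For the per-round cost, I would argue that the coarse structure assembled at level I is precisely $T_P$ together with its $\mathcal{O}(h)$ connecting edges (boundary-path pieces and connector strips between consecutive degree-three pixels); once the fine strips are available from the level-II/III phase, reassembling this coarse tree across a single median split---subject to the constant-glue separation-degree invariant of Lemma~\ref{lem:backbone}, namely at most two separating pixels handled with three glues---costs a number of stages governed by the $\mathcal{O}(h)$ coarse pieces, i.e.\ $\mathcal{O}(\log h)$ and not $\mathcal{O}(\log n)$. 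Multiplying the $\mathcal{O}(\log h)$ recursion depth by the $\mathcal{O}(\log h)$ per-round cost yields $\mathcal{O}(\log^2 h)$, and adding the one-time $\mathcal{O}(\log n)$ strip-building phase gives the claimed $\mathcal{O}(\log^2 h+\log n)$. The hole-free statement then follows immediately: with $h=0$ there are no degree-three pixels, the tree $T_P$ is empty, level I disappears entirely, and only the additive $\mathcal{O}(\log n)$ remains, while the glue, tile, bin, and scale-factor bounds are inherited verbatim from Theorem~\ref{th:hole2}.

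The hard part will be the middle claim---that at level I the per-split reassembly cost is $\mathcal{O}(\log h)$ rather than $\mathcal{O}(\log n)$. This requires certifying that the length-dependent assembly of connector strips and boundary-path pieces can be pulled out into the single parallel level-III/II phase, so that each level-I median split only re-coordinates the $\mathcal{O}(h)$ coarse pieces under the three-glue constraint, and that doing so never forces a re-expansion of a strip to its full length. Establishing this cleanly is exactly the refinement over the blanket ``$\mathcal{O}(\log n)$ stages per strip, $\mathcal{O}(\log k)$ splits'' accounting used in Lemma~\ref{lem:backbone}; everything else is bookkeeping inherited from Theorem~\ref{th:hole2}.
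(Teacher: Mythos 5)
Your proposal is correct and takes essentially the same route as the paper: the paper gives no separate argument for Corollary~\ref{cor:stage}, deriving it directly from the construction in Lemma~\ref{lem:backbone} and Theorem~\ref{th:hole2} together with the observation (stated just before Lemma~\ref{lem:backbone}) that at most $2h$ backbone pixels have degree three, so the level-I tree $T_P$ has $\mathcal{O}(h)$ nodes and recursion depth $\mathcal{O}(\log h)$, while levels II and III contribute only an additive $\mathcal{O}(\log n)$ --- exactly your accounting, including the vanishing of level I in the hole-free case. The one step you flag as hard is in fact immediate from the paper's own machinery: each level-I merge attaches at most three pre-assembled subtree supertiles at a single degree-three pixel in $\mathcal{O}(1)$ stages (Figure~\ref{fig:3glues}), with all strips built once in the parallel level-III phase, so your $\mathcal{O}(\log h)$ per-round estimate (and hence the $\mathcal{O}(\log^2 h)$ term) is, if anything, conservative.
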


\ignoreforced{
\section{Conclusion}
We presented some results from \cite{DDF08} that includes
staged assembly systems for strips, squares and monotone polyominoes. All of
the three systems have a loga\-rith\-mic stage complexity and use a constant
number of glues and tiles and do not have a big bin complexity. The first
system assembled a straight strip with doubling the length of the current built
strip. Similar to that we took a look on the second system that assembles
squares. Here we used also a divide-and-conquer approach but we needed to use
the jigsaw technique. The last of the three presented staged assembly systems
assembled monotone shapes. We differed between three cases to assemble the
shape correctly.

Apart from the staged assembly system for squares from \cite{DDF08}, we also
presented a staged assembly system that was first presented by Rothemund and
Winfree. Here we assembled a construct of two strips such that we could easily
fill up the square. Then we also presented two new staged assembly systems. One
for hole-free polyominoes and one for polyominoes with holes. For the first one we
decomposed the polyomino into rectangles. Then we assembled the rectangles and
connected them together with the decomposition tree.  For the second one we
used a construct likely for squares such that we could fill up the polyomino.
We assembled this construct, the backbone, and then filled up the polyomino.
Both staged assembly system have the same complexities: a polylogarithmic
number of stages, a constant number of glues and tiles and a quite big bin
complexity. The assembled shapes are within a constant scale factor and are
fully connected. 

Now that we have those staged assembly systems, we can assemble any desired
polyomino. Hence, we proved the thesis that we can assemble an arbitrary
polyomino with a polylogarithmic stage complexity, a constant glue and tile
complexity, a constant scale factor and with full connectivity. 
}

\section{Fully Connected Constructions for $\tau=1$}\label{t=1}
In this section we describe approaches for assembling \newnewtext{fully connected} polyominoes at temperature $\tau=1$. 

\ignoreforced{
\subsection{Monotone Shapes}\label{monshape}
For monotone shapes, \cite{DDF08} showed the following.
\begin{theorem}
A monotone polyomino can be assembled with full connectivity in a $\tau=1$ staged assembly system in \new{$\mathcal{O}(\log n)$} stages using $9$ glues, \new{$\mathcal{O}(1)$} tiles and \new{$\mathcal{O}(n)$} bins.
\end{theorem}
}
\ignoreforced{
\begin{proof}
The main idea is to decompose the polyomino with vertical cuts until supertiles are thin enough to decompose with horizontal cuts like it was done in the proof of Theorem~\ref{squareTheorem}. Here, we first have to decompose the shape into a left and a right half such that both halves can combined uniquely. Consider the three middlemost columns $i$, $i+1$ and $i+2$. Let $i$ be adjacent to $i+1$ and $i+1$ adjacent to $i+2$.

Now, some cases can occur. If there exists $\leq 3$ pixels adjacent between $i$ and $i+1$, we can cut the supertile between these columns. Same holds for columns $i+1$ and $i+2$ (see first one of Figure \ref{monDec}).
\begin{figure}[h!] 
\centering
\begin{minipage}{0.3\columnwidth}
\includegraphics[width = \columnwidth]{Monotone1}
\end{minipage}
\hfill
\begin{minipage}{0.3\columnwidth}
\includegraphics[width = \columnwidth]{Monotone2}
\end{minipage}
\hfill
\begin{minipage}{0.3\columnwidth}
\includegraphics[width = \columnwidth]{Monotone3}
\end{minipage}
\caption{Decomposition of a monotone shape.}
\label{monDec}
\end{figure}

Otherwise column $i+1$ can be adjacent to both columns in more than three pixels. Some of the pixels of $i+1$ are adjacent to both columns. If these are more than three, we locate the pixels and create a jigsaw tab/pocket combination (see second one of Figure \ref{monDec}). With that construction we obtain a supertile on the left side that is not monotone anymore but we can ignore the column with the hole because it is not part of a middlemost column.

If there are less than three pixels adjacent to both columns then we decompose the supertile with an elbow (see third one of Figure \ref{monDec}): Assume that the highest pixel of $i$ is higher than the lowest pixel of $i+2$. Now let the pixels of $i+1$ belong either to the left supertile if they are adjacent to pixels of $i$ or to the right supertile otherwise.

Similarly as building a square we need only nine glues. Hence, we have a constant number of tiles.  The decomposition tree is balanced, so we have \new{$\mathcal{O}(\log n)$} stages and \new{$\mathcal{O}(n)$} bins because we may have to keep every single column in different bins.
\end{proof}
}

\subsection{Hole-Free \new{Polyominoes}, $\tau=1$}\label{holefreeshape}
We present a system for building hole-free 
polyominoes. The main idea is based on \cite{DDF08}, i.e., splitting the polyomino 
into strips. Each of these strips gets assembled piece by piece; if there
is a component that can attach to the current strip, we create it
and attach it.

Our geometric approach partitions the polyomino into rectangles and
uses them to assemble the whole polyomino. Even for complicated shapes with many vertices,
this number of rectangles is never worse than quadratic in the size of the bounding box;
in any case we get a large improvement \new{of} the stage complexity.  

We first consider a building block, see Figure~\ref{DegSquare}. \new{Originally a rectangle, its shape gets modified by
{\em tabs} and {\em pockets} that fit together like key and lock. This is
based on the jigsaw technique of~\cite{DDF08}, whose idea is to use the geometric shape for ensuring unique assembly.}
In our case tabs are \new{$1\times 1$ or $1\times 2$ rectangles that are
attached to the considered rectangle} and pockets are \new{$1\times 1$ or $1
\times 2$ rectangles that are} missing \new{from} the \new{boundary of the}
considered rectangle.

\begin{lemma}
\label{degsquare}
A \xtimes[2n]{2m} rectangle (with $n\geq m$) with at most two tabs at top and left side and at
most two pockets at each bottom or right side (see Figure \ref{DegSquare}) can be
assembled with $\mathcal{O}(\log n)$ stages, $9$ glues, $\mathcal{O}(1)$ tiles
and $\mathcal{O}(1)$ bins at $\tau=1$.  \end{lemma}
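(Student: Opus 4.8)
The plan is to reduce the lemma to the jigsaw divide-and-conquer method for squares from~\cite{DDF08}, treating the $2n\times 2m$ rectangle as the base shape and the constantly many tabs and pockets as localized modifications that leave every asymptotic budget unchanged. First I would decompose the rectangle recursively: at each step I make a single jigsaw cut, sealed by a tab/pocket pair, that halves the current longer dimension, continuing until the pieces become thin (one dimension at most a fixed constant such as $3$); I then switch to cuts in the orthogonal direction until only single pixels remain. Because every cut halves a dimension, the decomposition tree has height $\mathcal{O}(\log(2n))+\mathcal{O}(\log(2m))=\mathcal{O}(\log n)$, using $n\geq m$, and assembling bottom-up along this tree takes $\mathcal{O}(\log n)$ stages.

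For the glues I would reuse the nine-glue palette of the square construction of~\cite{DDF08}: three glues suffice to seal any single jigsaw cut uniquely, nine suffice to distinguish the leftmost, rightmost, and interior pieces along one direction of cuts, and, since tiles are non-rotatable, the same nine can be reused for the orthogonal direction. The even side lengths $2n$ and $2m$ are a convenience that keeps the jigsaw teeth and the boundary tabs/pockets at consistent parities, so no parity exceptions arise when placing the cuts.

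The tabs and pockets (see Figure~\ref{DegSquare}) are absorbed into the decomposition as follows. A pocket is a missing $1\times 1$ or $1\times 2$ block of boundary pixels, and a tab is an extra $1\times 1$ or $1\times 2$ block attached to the boundary; in either case only a constant number of pieces have their shape changed, since there are at most two tabs and two pockets. I place each affected piece in its own dedicated bin and carry it upward through the recursion, reusing that bin across all levels. This adds only $\mathcal{O}(1)$ bins and $\mathcal{O}(1)$ tile types, and combined with the $\mathcal{O}(1)$ bins that suffice at a single level of the clean decomposition (again reused across all $\mathcal{O}(\log n)$ levels), the total bin complexity remains $\mathcal{O}(1)$.

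The main obstacle is proving unique assembly at $\tau=1$ in the presence of the decorations. Since a single matching glue already triggers attachment at $\tau=1$, correctness cannot come from cooperative binding and must be forced entirely by the jigsaw geometry. I therefore need to argue that the boundary tabs and pockets neither fit into nor interfere with the internal jigsaw cuts --- which holds because the decorations sit on the outer boundary and their glues are never exposed at an internal cut --- and that each special piece assembles in a unique order with its generic neighbors using only the fixed palette. Verifying that the constantly many decorations can always be localized to a constant number of special pieces without forcing a new glue or spoiling the uniqueness of the terminal supertile is the crux; once this is established, the stage, glue, tile, and bin counts follow directly from the square construction.
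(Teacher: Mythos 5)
Your plan---recursive jigsaw halving of the decorated rectangle, reusing the nine-glue palette of the square construction of~\cite{DDF08}---is a reasonable reduction, and your stage, glue, and bin accounting would go through \emph{if} the decorated pieces behaved. But the load-bearing step is exactly the one you defer, and you say so yourself: showing that the tabs and, especially, the pockets do not spoil unique assembly at $\tau=1$. This cannot be waved off. At temperature $1$ a single coincident matching glue suffices for attachment, so every wrong relative placement of two supertiles must be excluded by \emph{overlap} alone; jigsaw teeth exclude offset placements precisely because they force material into occupied space. A pocket removes material, which is the one kind of modification that can re-enable an offset placement the tooth was meant to block---for instance when, deep in the recursion, a pocket of the original bottom or right boundary ends up adjacent to an internal cut of a thin sub-piece whose cut is only three pixels long. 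Your observation that the decorations ``sit on the outer boundary and their glues are never exposed at an internal cut'' rules out spurious \emph{glue matches}, but not spurious \emph{non-overlapping placements}, which is where the danger lies. In addition, ``placing each affected piece in its own dedicated bin and carrying it upward'' leaves the mix graph unspecified: the decorated piece must participate in a merge at every level, each time with a different generic supertile, and its shape (hence its attachment analysis) changes at every level.

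The paper's proof is engineered so that this verification never arises, via a different decomposition (Figure~\ref{partsquare}): the rectangle is partitioned into width-2 vertical columns joined by alignment bulges in rows $n$ and $n+1$; a clean column is cut between rows $n-1$ and $n$ and between rows $n+1$ and $n+2$ into two width-2 strips plus a $2\times 2$ middle square; and---the key point---the columns carrying decorations are cut so that every tab and pocket is isolated in a \emph{constant-size} piece (a single tile or a line of length at most three), of which there are at most six, each kept in its own bin. Unique attachment of a constant-size decorated piece to a clean strip is immediate to check, and everything else is the standard strip-doubling argument, giving $\mathcal{O}(\log n)$ stages, nine glues, and $\mathcal{O}(1)$ bins. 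To repair your proposal you would either have to carry out the overlap case analysis for decorated halves at every recursion level (including choosing cut positions that avoid the decorations), or first strip the decorations off into constant-size pieces---at which point you have essentially reconstructed the paper's proof.
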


\begin{figure}[h!]
\centering
\includegraphics[scale = 0.5]{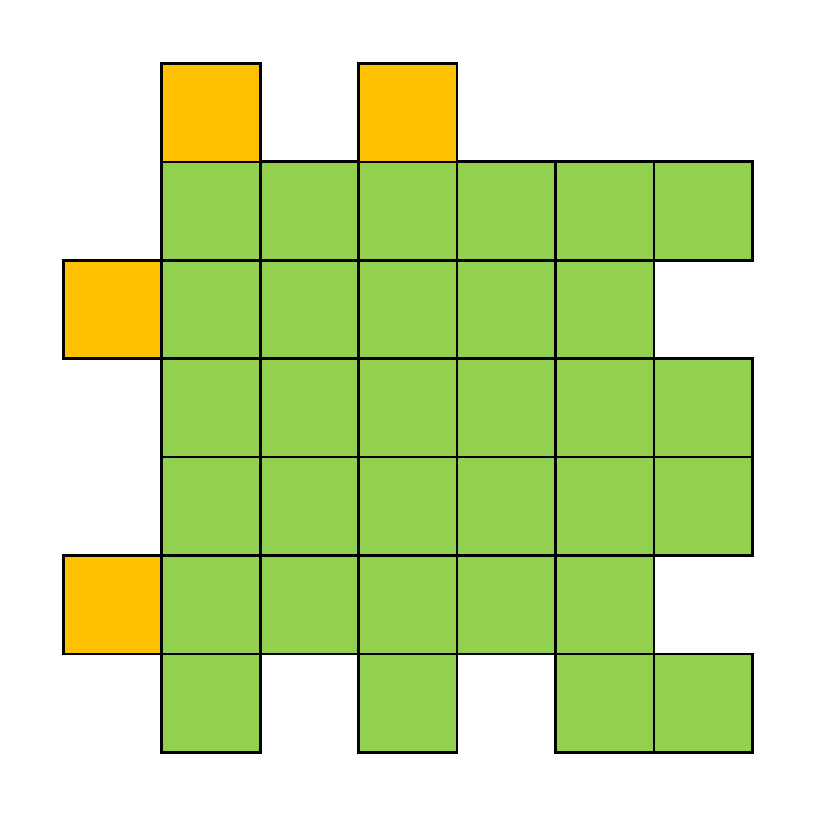}
\caption[Square-like shape with tabs and pockets]{A \new{rectangular} shape (green) with tabs on top and left side (orange), and pockets on bottom and right side.}
\label{DegSquare}
\end{figure}

\begin{figure}[h!] 
\centering
\begin{subfigure}[t]{0.20\columnwidth}
	\includegraphics[width = \columnwidth]{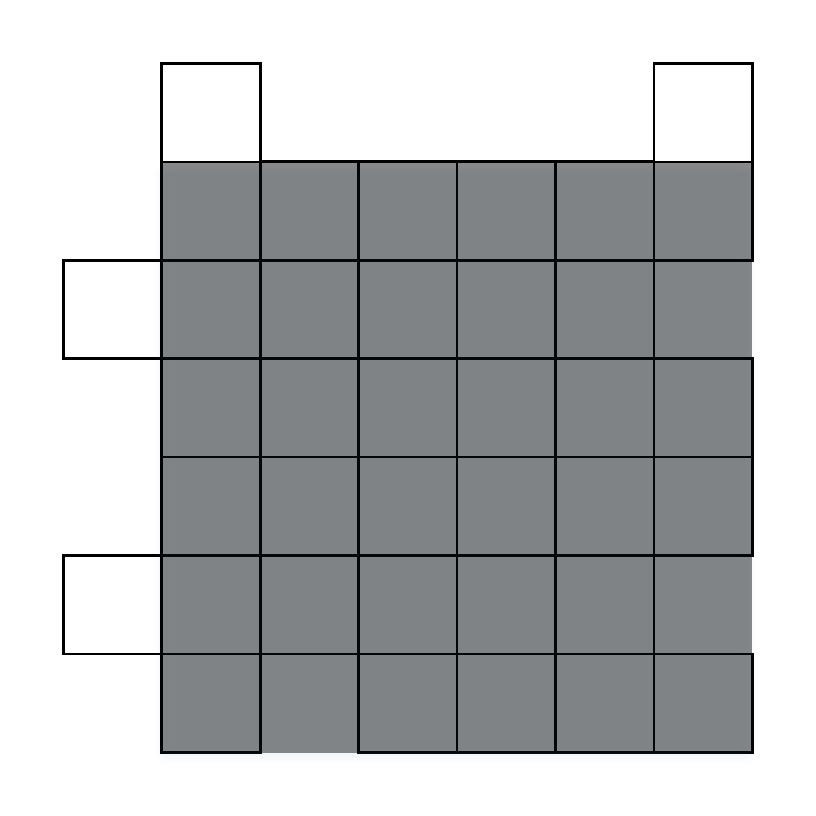}
	\caption{{\new{Shaded in black is the original rectangle without tabs and pockets.}}}
	\label{fig:nmSquare}
\end{subfigure}
\hfil
\begin{subfigure}[t]{0.24\columnwidth}
	\includegraphics[width = \columnwidth]{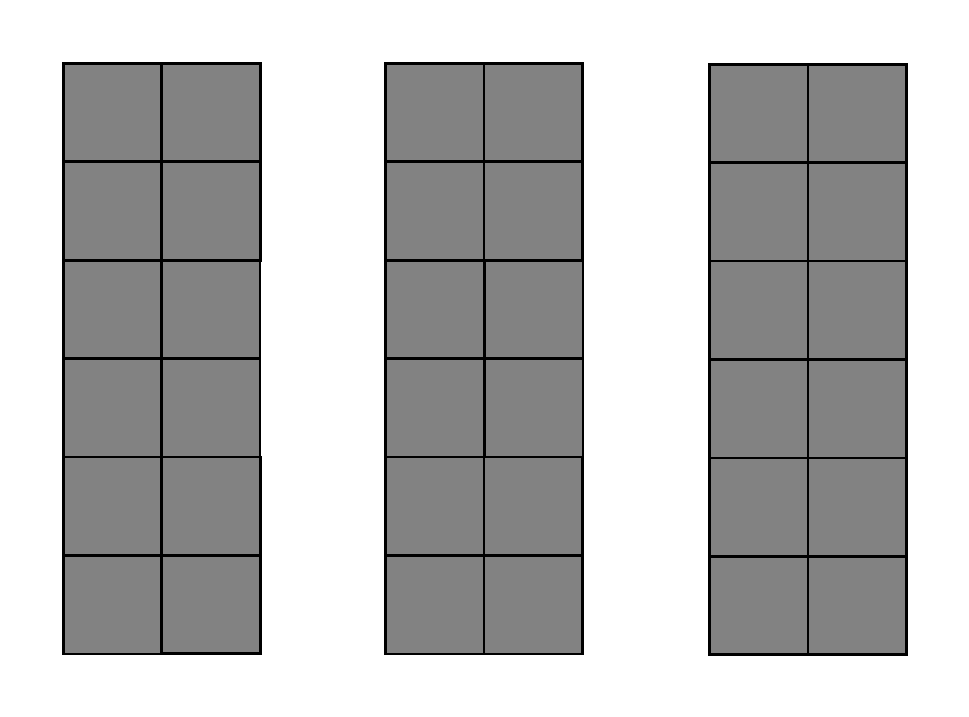}
	\caption{\new{Partition of the original rectangle into width-2 rectangles}}
	\label{fig:width2Rects}
\end{subfigure}
\hfil
\begin{subfigure}[t]{0.2\columnwidth}
	\includegraphics[width = \columnwidth]{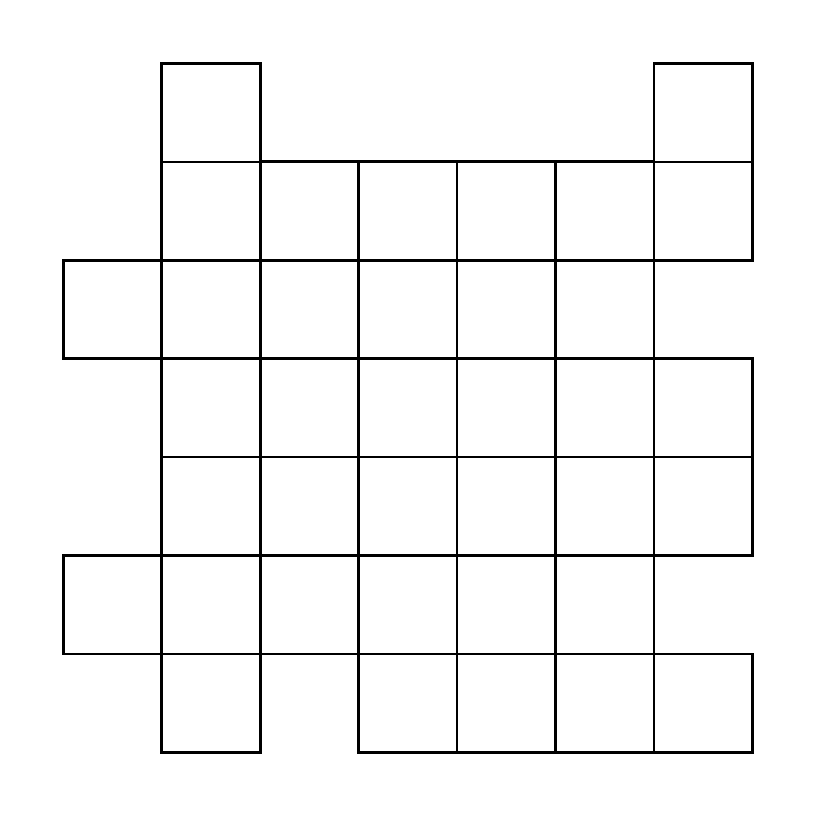}
	\caption{\new{The modified rectangle with tabs and pockets.}}
	\label{fig:modified}
\end{subfigure}

\begin{subfigure}[h!t]{0.28\columnwidth}
\includegraphics[width = \columnwidth]{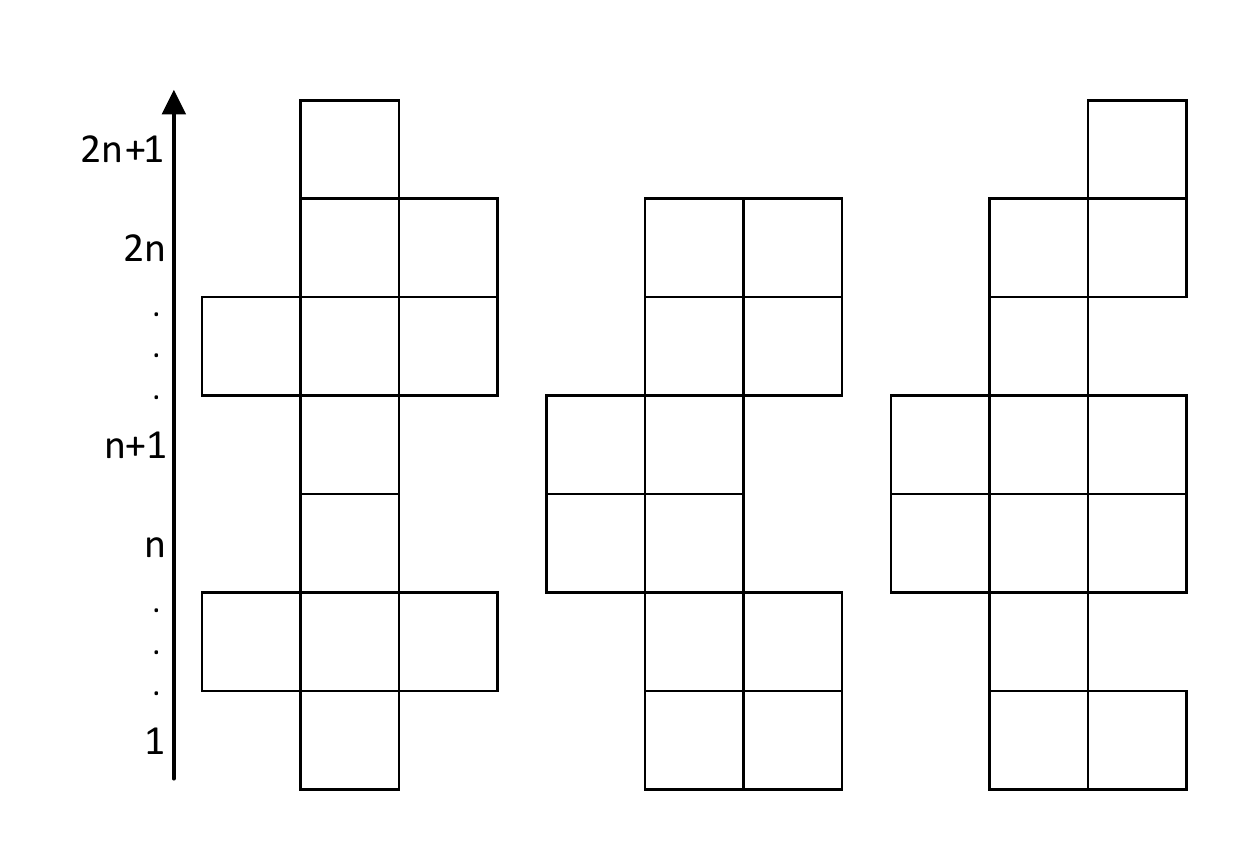}
\caption{\new{Adding bulges, tabs and pockets to the width-2 rectangles. Note the coordinate labels for the rows.}}
\label{fig:holefreeb}
\end{subfigure}
\hfil
\begin{subfigure}[t]{0.28\columnwidth}
	\includegraphics[width = \columnwidth]{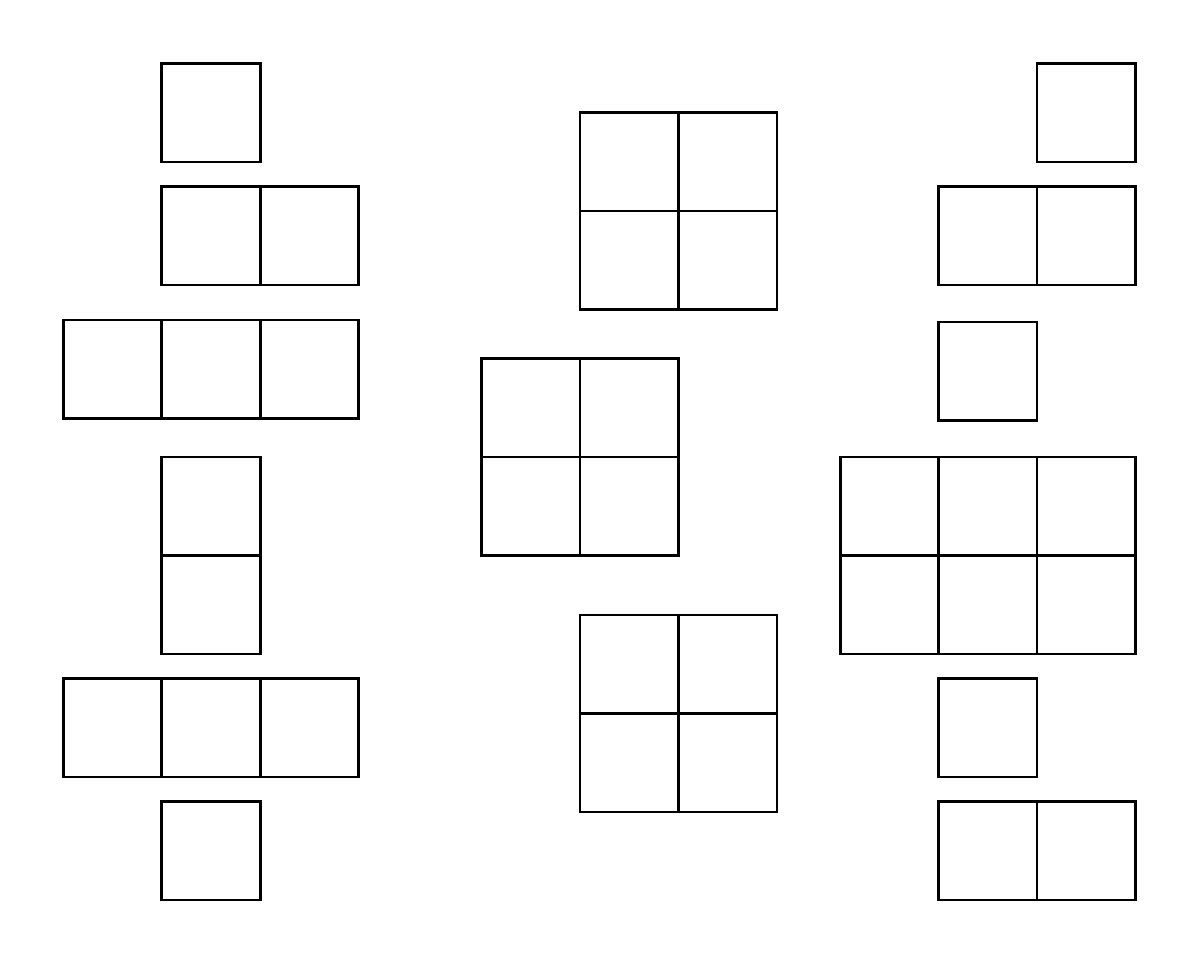}
	\caption{\new{Cutting the components horizontally. The left and the right component have tabs and pockets, the middle component has none.}}
	\label{fig:holefree}
\end{subfigure}
\caption[Decomposition of a modified square]{\newnewtext{Stepwise decomposition
of a modified rectangle. \new{The union of the dark shaded pixels 
indicates the initial rectangle whose shape is extended and reduced by tabs and
pockets}.}} \label{partsquare}
\end{figure}

\begin{proof}{
\new{Refer to Figure~\ref{partsquare} for the overall construction.}
First consider the \xtimes[2n]{2m} rectangle without any tabs or pockets (\new{shaded dark} in Figure~\ref{fig:nmSquare}), which we partition into
(vertical) rectangles of width 2 (see Figure~\ref{fig:width2Rects}). \new{Analogously, the modified rectangle shown in Figure~\ref{fig:modified}
gets dissected into {\em components}, i.e., pieces that are}
joined by \newtext{bulges} in rows $n$ and $n+1$, \new{in addition to} the tabs
and pockets from the shape we need to build (see Figure~\ref{fig:holefreeb}).
\new{For assembly, we use glues on their sides like they \new{are} used in the jigsaw technique of~\cite{DDF08}.}
Now every component has a maximum width of 3, even with the tabs. 
This allows us to use nine glues to create each component with
attached tabs and pockets as follows. 

A component without a tab or pocket 
\new{(e.g., the middle component in Figure~\ref{fig:holefree})}
is cut between the 
$(n-1)$st and the $n$th row, as well as between the $(n+1)$st and the $(n+2)$nd row.
Then we have two strips of width 2 and one \xtimes[2]{2} square. The square can
be assembled by brute force with desired glues on its sides. The strips can also be
decomposed recursively like a \xtimes{n} strip with desired glues on the
sides. Thus, for this kind of component, nine glues suffice and the component is
built within $\mathcal{O}(\log n)$ stages. Note that we need $\mathcal{O}(1)$
bins to store every possible component of this kind, i.e., they use
one out of three possible glue triples on each side \newnewtext{(compare \new{to} square assembly in \cite{DDF08}).}
\newnewtext{Because the left side of a component uses completely different glue types than the right side, a component will never attach to itself.}

A component with tabs and/or pockets  
\new{(e.g., the left or right component in Figure~\ref{fig:holefree})}
is cut between rows, such
that only components without tabs and pockets and at most four components with
tabs and pockets exists. Note that the four components are either single tiles
or \new{lines of length at most three}. The other components are either strips of width two or
similar to the components above. Hence, we need at most $\mathcal{O}(\log n)$
stages to build the biggest component. Then we assemble all components by successively
putting together pairs. We observe that this kind of component appears at most six times.
Thus, we need six  bins to store the components of this kind. Again the nine glues
suffice.

Now we have all components in $\mathcal{O}(1)$ bins, so we can 
assemble the components in a pairwise fashion to the desired polyomino within $\mathcal{O}(\log n)$
stages. Overall, nine glues suffice, so we have $\mathcal{O}(1)$ tiles.
}
\end{proof}

\begin{theorem}
\label{th:nohole1}
Let $P$ be a hole-free polyomino with $k$ vertices. Then there is a $\tau=1$
staged assembly system that constructs a fully connected version of $P$
in $\mathcal{O}(\log^2 n)$ stages, with $18$ glues, $\mathcal{O}(1)$ tiles,
$\mathcal{O}(k)$ bins and scale factor $4$.  
\end{theorem}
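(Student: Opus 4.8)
The plan is to lift the single-rectangle construction of Lemma~\ref{degsquare} to an arbitrary hole-free polyomino by decomposing the (scaled) target into rectangular building blocks and then gluing these blocks together with a second, higher-level application of the jigsaw technique of~\cite{DDF08}. First I would scale $P$ by a factor of $4$ to obtain $P^4$. The scaling provides room for the two nested jigsaw levels — the intra-block tabs and pockets already used inside Lemma~\ref{degsquare}, together with the inter-block tabs, pockets and bulges needed when combining blocks — and it guarantees that every interface between two adjacent blocks has length divisible by $4$, so tabs and pockets can be placed without colliding. This is exactly where the scale factor $4 = 2\cdot 2$ comes from: one factor $2$ is already built into the $2m\times 2n$ blocks of Lemma~\ref{degsquare}, and the second factor $2$ buys the room for the combination-level interlocking.

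Next I would partition $P^4$ into $\mathcal{O}(k)$ axis-parallel rectangles. Since $P$ is hole-free, it is simply connected with $\mathcal{O}(k)$ reflex vertices, and extending a vertical chord from each reflex vertex yields a partition into $\mathcal{O}(k)$ rectangles. Each rectangle, equipped with the appropriate tabs (where it meets a neighbor on its top or left side) and pockets (on its bottom or right side), is precisely a building block of Lemma~\ref{degsquare}, so it can be assembled in $\mathcal{O}(\log n)$ stages using $9$ glues, $\mathcal{O}(1)$ tiles and $\mathcal{O}(1)$ bins, and it is internally fully connected.

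To assemble the whole shape I would form the adjacency graph of the rectangles; because $P$ is hole-free this graph is connected, and I would decompose it hierarchically by repeatedly splitting at a median rectangle, exactly as in the tree-median argument of Lemma~\ref{lem:backbone}. This produces a combination tree of depth $\mathcal{O}(\log k)$ with $\mathcal{O}(k)$ nodes, giving bin complexity $\mathcal{O}(k)$. At each internal node, the two balanced subtrees of already-assembled blocks are attached to the median block along their shared interfaces; a second, disjoint set of $9$ glues is reserved for these inter-block jigsaw interfaces so that the blocks built with the first $9$ glues never interfere with the combination step, for $18$ glues in total. The tabs and pockets along each interface force a unique relative placement, and assigning a positive-strength glue to every unit edge of the interface makes the entire shared boundary bond, which is what full connectivity requires; adjacencies not used by the decomposition tree (closing cycles in the adjacency graph) are bonded in the same way, and since planarity is not claimed this causes no obstruction at $\tau=1$. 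Assembling a single connecting block at each of the $\mathcal{O}(\log k)$ recursion levels costs $\mathcal{O}(\log n)$ stages, since blocks have size up to $n$; mirroring Lemma~\ref{lem:backbone}, the overall stage count is therefore $\mathcal{O}(\log n\cdot\log k) = \mathcal{O}(\log^2 n)$.

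The main obstacle I expect is guaranteeing \emph{unique} terminal assembly together with full connectivity through the hierarchical combination. I must verify that no block or partial assembly can attach in a shifted or spurious position — this is handled by the geometric tab/pocket encoding, which is exactly why the width-$\le 3$ bound of Lemma~\ref{degsquare} and the factor-$4$ scaling are essential — and that reusing only $9$ inter-block glues across all $\mathcal{O}(k)$ interfaces never creates an unintended bond. The latter is where the argument that only a constant number of interfaces meet at any single block, combined with the non-rotatability of supertiles (which lets one separate horizontal from vertical glue constraints, as in the level-I analysis of Lemma~\ref{lem:backbone}), has to be made carefully.
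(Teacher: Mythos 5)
Your overall route is the same as the paper's: cut the polyomino into $\mathcal{O}(k)$ rectangles through reflex vertices, assemble each rectangle via Lemma~\ref{degsquare} with one set of $9$ glues, and combine the rectangles hierarchically via a tree-median recursion over the rectangle adjacency tree with a second, disjoint set of $9$ glues, obtaining the scale factor $4$ as $2\cdot 2$. But there is a genuine gap at the combination step. You assume that removing the median rectangle leaves ``two balanced subtrees'' and, later, that ``only a constant number of interfaces meet at any single block.'' Neither holds: a tree median can have arbitrarily high degree, so removing it can leave $\Theta(k)$ hanging components, and a single rectangle of the decomposition can be adjacent to $\Theta(k)$ others --- think of a comb, one long bar with many teeth, where the cuts through reflex vertices make the bar a single rectangle adjacent to every tooth. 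In that situation Lemma~\ref{degsquare} is not applicable to that rectangle (its hypothesis allows at most two tabs and two pockets per side), and nine inter-block glues cannot by themselves force $\Theta(k)$ distinct components to attach uniquely to one block. You correctly flag this as the main obstacle, but you give no mechanism that resolves it, and the resolution is not a routine verification.

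The missing idea, which is the heart of the paper's proof, is to further subdivide the median rectangle $R$ itself. The second factor of $2$ in the scaling is used precisely for this: $R$ is split in half by a horizontal line, and each half is then cut vertically into jigsaw sub-pieces, with the cuts placed at the leftmost contact between $R$ and each adjacent hanging component (see Figures~\ref{fig:holea}--\ref{fig:holec}). Each hanging component then attaches to its own sub-piece of $R$ independently of the others; afterwards the sub-pieces with their attached components are merged, the two halves of $R$ are joined, and the recursion continues. This subdivision is iterated so that every leaf rectangle of the decomposition has at most four adjacent components --- two tabs on its top/left sides and two pockets on its bottom/right sides --- which is exactly the precondition under which Lemma~\ref{degsquare} applies. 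With that step supplied, your accounting of glues ($9+9=18$), bins ($\mathcal{O}(k)$), and stages ($\mathcal{O}(\log k \cdot \log n) = \mathcal{O}(\log^2 n)$) goes through as stated; note also that for all-parallel cuts the rectangle adjacency graph of a hole-free polyomino is in fact a tree, so the ``cycle-closing adjacencies'' you worry about do not arise.
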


\begin{proof}
We cut the polyomino $P$ with horizontal lines, such that all cuts go through reflex
vertices of $P$, leaving a set of rectangles. 
If $V_r$ is
the set of reflex vertices \new{of} the polyomino, we have at most $|V_r|=:k_r$ cuts and
therefore $\mathcal{O}(k)$ rectangles.
 \newtext{Consider the rectangle adjacency tree, i.e., a graph whose vertices are the rectangles and an edge connects two vertices if the corresponding rectangles have a side in common.} \newnewtext{Note that the resulting adjacency graph is a tree because $P$ has no holes.} 
\new{As shown in Figure~\ref{fig:holea}, we find a} rectangle \new{$R$}
that forms a tree median in the rectangle adjacency tree, i.e., a rectangle that splits the tree into connected
components that \new{each} have at most half the number of rectangles. 

Recursing over
this splitting operation \new{yields} a tree decomposition of depth 
$\mathcal{O}(\log k)$. On the pieces,
we use a scale factor of 2 for employing a jigsaw decomposition \newtext{corresponding to \cite{DDF08}.}

\begin{figure}[h!] 
\centering
	\begin{subfigure}{0.45\textwidth}
	\includegraphics[width = \columnwidth]{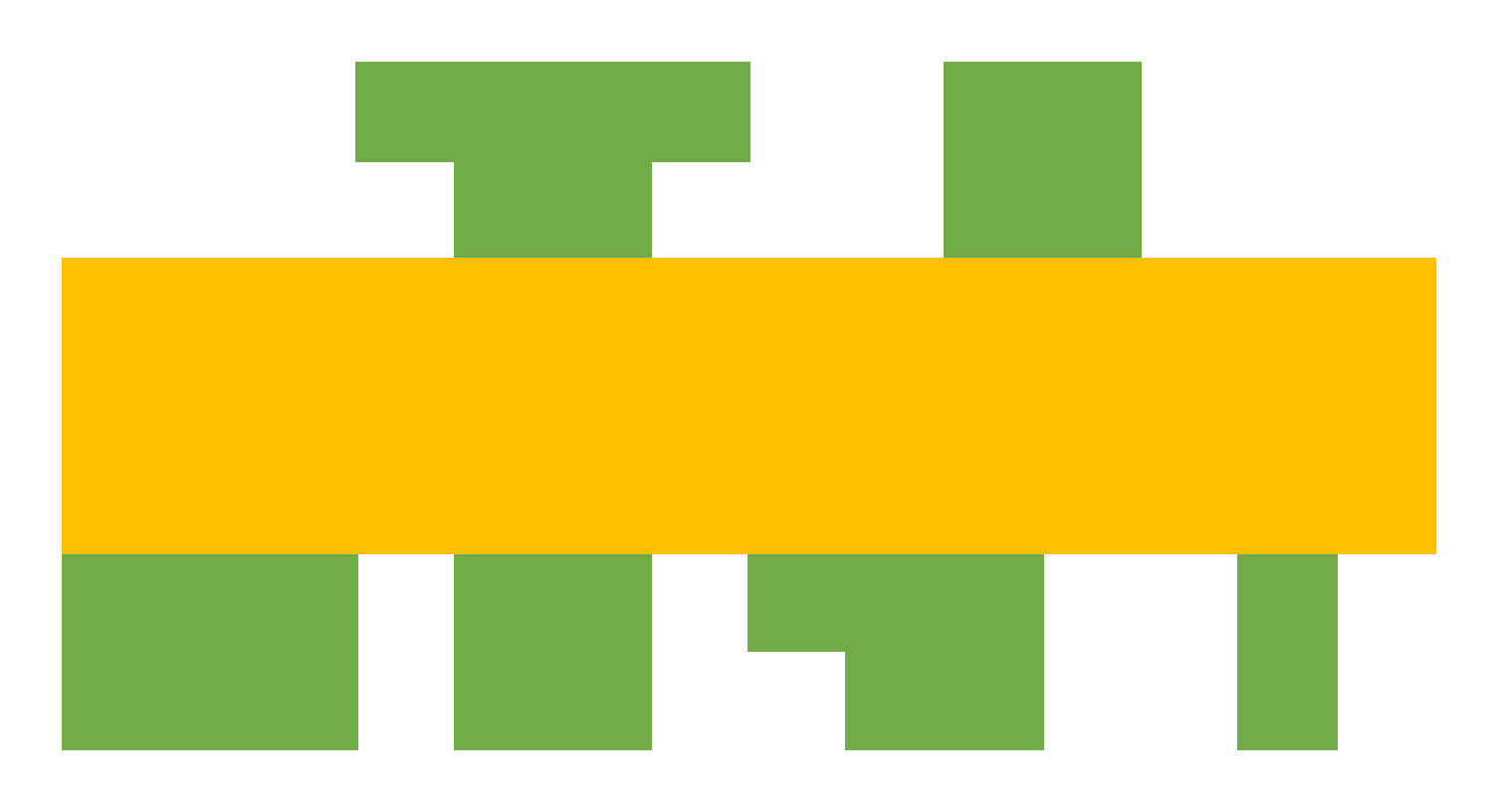}
	\caption{\new{A splitting rectangle (orange) whose removal partitions the polyomino into components (green).}}
	\label{fig:holea}
	\end{subfigure}
\ \\
\hfil
	\begin{subfigure}{0.35\textwidth}
	\includegraphics[width = \columnwidth]{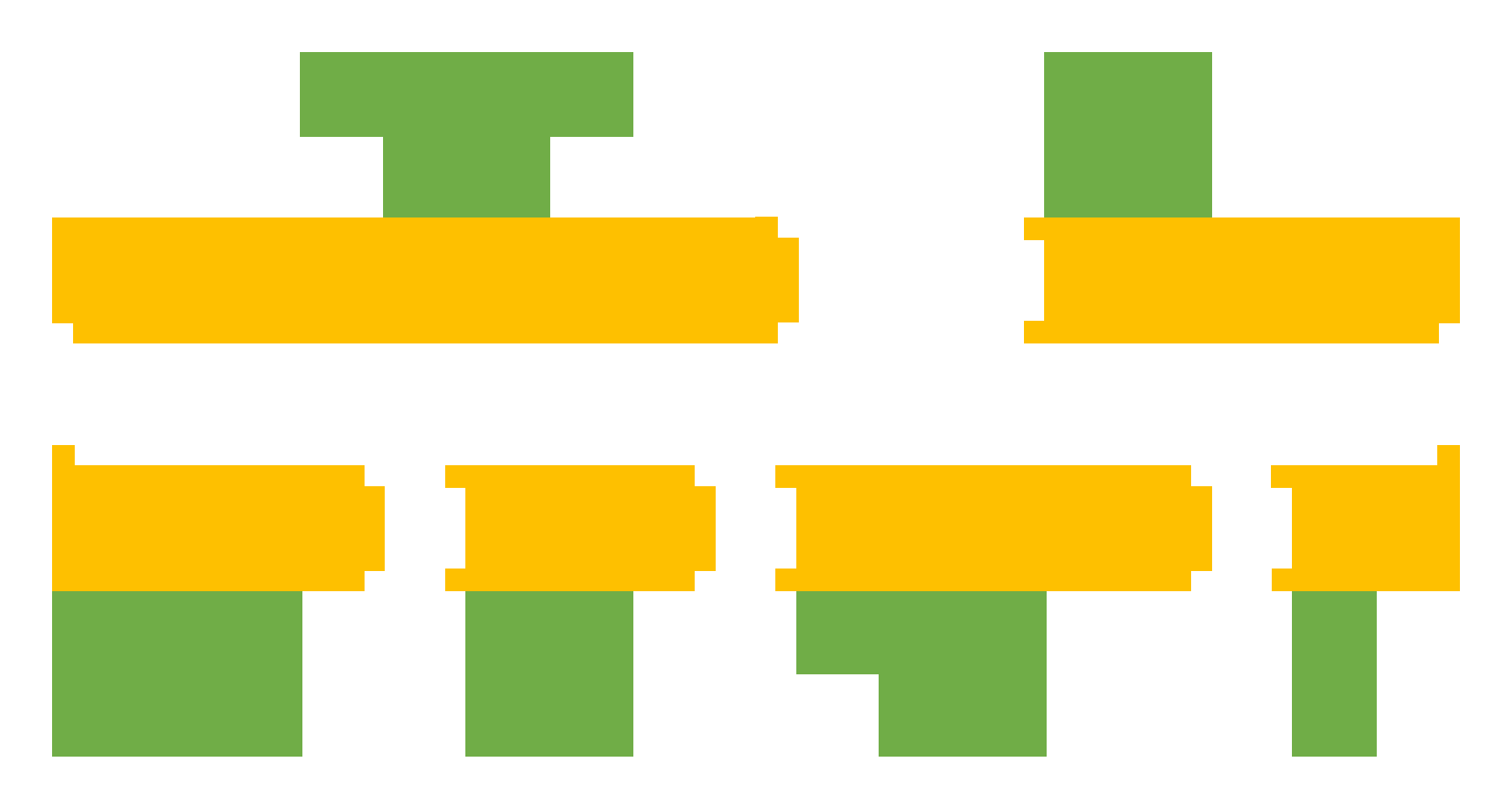}
	\caption{\new{Decomposition of the splitting rectangle.}}
	\label{fig:holeb}
	\end{subfigure}
\hfil
	\begin{subfigure}{0.35\textwidth}
	\includegraphics[width = \columnwidth]{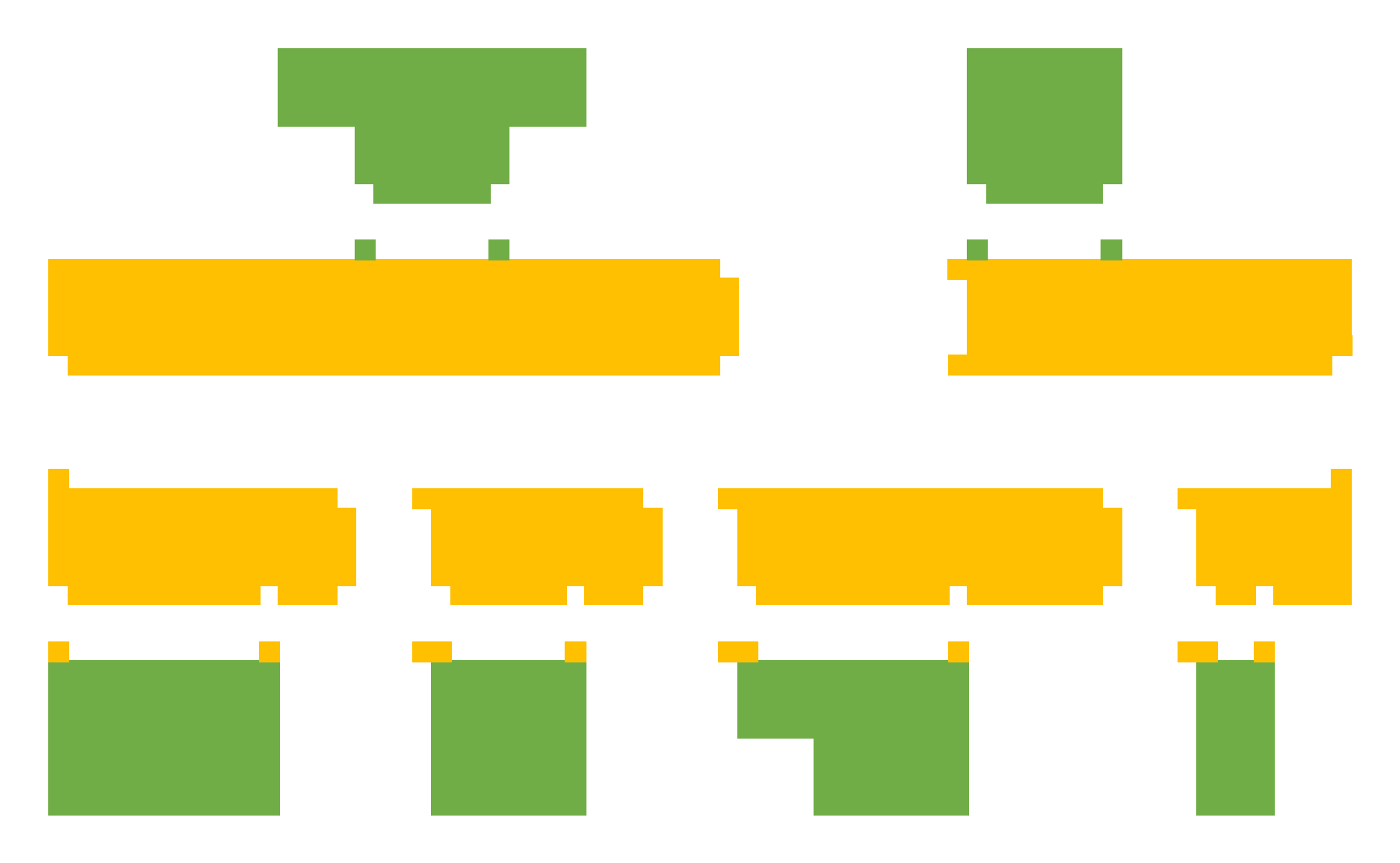}
	\caption{\new{Further decomposition of the resulting connected components.}}
	\label{fig:holec}
	\end{subfigure}
\caption[Decomposition of hole-free shapes]{\new{Splitting operation by a median rectangle and further disassembly.}}
\label{holefree}
\end{figure}

\new{When considering a single split, the polyomino without $R$ decomposes into
a number of connected components, corresponding to the green pieces in
Figure~\ref{fig:holea}}. To assemble all of \new{these components into the
original polyomino}, we employ another scale factor of 2, allowing us to split
$R$ in
half with a horizontal line, \new{as shown in Figure~\ref{fig:holeb}. Each half is further subdivided vertically into 
jigsaw components, such that each component can connect to a
part of $R$ independently from the others, \new{as shown in Figure~\ref{fig:holec}}.}
\newtext{For our purpose, we place the cuts for this subdivision such that they run along the leftmost side where a component of the chosen rectangle and its adjacent rectangle meet.}
When all components have been attached to some part of $R$, we can assemble
both halves of $R$ and then put these two together. Doing this for all
rectangles produces $\mathcal{O}(k_r)$ new components. Hence, our
decomposition tree has at most $\mathcal{O}(k_r) = \mathcal{O}(k)$ leaves, where
the leafs are rectangles that need $\mathcal{O}(\log n)$ \newtext{stages} for construction.
This yields $\mathcal{O}(\log k \log n)$ stages overall; the rectangle components
consume $\mathcal{O}(k)$ bins. Similar to
assembling a square, we need nine glues to uniquely assemble all rectangles to the
correct polyomino.

By construction, every rectangle component\newtext{, i.e., a leaf of the decomposition tree,} has at most four adjacent rectangle components, \newtext{because we decompose a chosen rectangle until all of the rectangle components have at most four adjacent rectangles};
its size is \xtimes[2w]{2h} for some width $w$ and height $h$. The four
adjacent components are all connected at different sides, so the left and
upper side each have two tabs, while the right and lower side have two pockets.
Thus, we can use the approach of Theorem~\ref{degsquare} to
assemble all rectangles with $9$ additional glues and $\mathcal{O}(1)$ bins for
each rectangle component.

Overall, we have $\mathcal{O}(\log n)$ stages to assemble the $\mathcal{O}(k)$
rectangles with $\mathcal{O}(1)$ bins for each rectangle, plus
$\mathcal{O}(\log^2 n)$ stages to assemble the polyomino from the rectangles,
for a total of $\mathcal{O}(\log^2 n)$ stages and $\mathcal{O}(k)$ bins.
For the rectangles we need nine glues, along with nine glues for the remaining assembly,
for a total of 18 glues, with $\mathcal{O}(1)$ tile types. The overall scale factor
is 4. 
\end{proof}

\newest{With respect to later application in Theorem~\ref{th:hole1}, we remark that the construction of Theorem~\ref{th:nohole1}
hinges on sufficient vertical thickness of the constructed polyomino; the scale factor is only used to guarantee this 
thickness.

\begin{corollary}
\label{cor:thick}
Let $Q$ be a hole-free polyomino with $k$ vertices, such that $Q$ has vertical thickness at least 4, i.e.,
every maximal connected set of pixels in $Q$ with the same $x$-coordinate 
contains at least four elements. Then there is a $\tau=1$
staged assembly system that constructs a fully connected version of $Q$
in $\mathcal{O}(\log^2 n)$ stages, with $18$ glues, $\mathcal{O}(1)$ tiles,
$\mathcal{O}(k)$ bins. 
\end{corollary}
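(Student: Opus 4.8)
The plan is to re-run the construction of Theorem~\ref{th:nohole1} directly on $Q$, with scale factor $1$ in place of $4$. First I would cut $Q$ by horizontal lines through its reflex vertices, leaving $\mathcal{O}(k)$ rectangles; since $Q$ is hole-free, the rectangle adjacency graph is a tree, so I can recurse by repeatedly removing a tree-median rectangle $R$, obtaining a decomposition of depth $\mathcal{O}(\log k)$. Exactly as in Theorem~\ref{th:nohole1}, I would split each median rectangle in half by a horizontal line and subdivide each half vertically into jigsaw components, so that every component attaches to its part of $R$ independently; the rectangles themselves are produced by the building block of Lemma~\ref{degsquare}. Assembling bottom-up then reconstructs $Q$ with $\mathcal{O}(\log n)$ stages per rectangle, hence $\mathcal{O}(\log k\cdot\log n)=\mathcal{O}(\log^2 n)$ stages in total, $\mathcal{O}(k)$ bins, $18$ glues ($9$ inside Lemma~\ref{degsquare} and $9$ for the re-assembly), and $\mathcal{O}(1)$ tile types.

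The single thing to justify is that no scaling is required. The two factors of $2$ in the scale factor of Theorem~\ref{th:nohole1} are spent on the jigsaw decomposition of Lemma~\ref{degsquare} and on halving a median rectangle with a horizontal line. Both are genuinely \emph{vertical} demands: Lemma~\ref{degsquare} places its connecting bulges in the two middle rows and cuts immediately above and below them, which needs height at least $4$, and halving $R$ horizontally needs $R$ tall enough to leave two nonempty halves that can carry the jigsaw teeth. In other words, the sole effect of the scale factor is to force the constructed polyomino to have vertical thickness at least $4$. As this is precisely the hypothesis on $Q$, I would argue that every vertical extent the construction relies on is already at least $4$, the identical sequence of operations goes through verbatim, and all complexity bounds are inherited, now for $Q$ itself.

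The main obstacle is to make ``only vertical thickness is needed'' fully precise by auditing every place where Theorem~\ref{th:nohole1} tacitly used uniform scaling. Two points need care. First, Lemma~\ref{degsquare} is stated for even dimensions $2w\times 2h$, while an unscaled $Q$ can yield rectangles of odd width; here I would observe that the horizontal direction is handled by the length-flexible strip assembly of~\cite{DDF08}, which builds $1\times m$ strips for \emph{arbitrary} $m$, so only the height needs the parity-free bound ``at least $4$'' furnished by the thickness hypothesis. Second, a horizontal cut through a reflex vertex can still carve off a rectangle of height less than $4$ (one whose column segments continue into a neighboring slab), to which Lemma~\ref{degsquare} does not apply; such a rectangle is thin in one direction and can instead be built directly by the strip/brute-force method in $\mathcal{O}(\log n)$ stages and $\mathcal{O}(1)$ bins, leaving the bounds intact. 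Confirming that these two adaptations cover all cases---and that the jigsaw attachments between a median rectangle and its components stay uniquely determined once the relevant vertical extents are at least $4$---is the crux of the argument.
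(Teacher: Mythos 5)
Your proposal is correct and takes essentially the same route as the paper: the paper's own proof is simply the remark that the construction of Theorem~\ref{th:nohole1} applies verbatim with no scaling, because the scale factor served only to guarantee the vertical thickness that $Q$ now has by hypothesis. Your additional audit of the edge cases (odd widths handled by length-flexible strips, and thin rectangles of height less than $4$ that horizontal cuts through reflex vertices can still carve off, to be built directly as strips) goes beyond the paper's two-line justification rather than diverging from it.
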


The construction is identical to the one of Theorem~\ref{th:nohole1}; note that no scaling is necessary,
as the vertical thickness suffices to allow the required horizontal splitting.
}

\subsection{\boldmath \newtext{Polyomino} with Holes, $\tau = 1$}\label{holeshape1}

In this section we \new{give staged assembly systems with temperature $\tau=1$ for arbitrary polyominoes that may have holes.}

\begin{theorem}
\label{th:hole1}
Let $P$ be an arbitrary polyomino with $k$ vertices. Then there is a $\tau=1$
staged assembly system that constructs 
\new{in $\mathcal{O}(\log^2 n)$ stages, with $20$ glues, $\mathcal{O}(1)$ tiles,
$\mathcal{O}(n)$ bins a fully connected supertile $P^6$ that arises from $P$
by a scale factor of $6$.}
\end{theorem}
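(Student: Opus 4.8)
The plan is to reduce the construction to the hole-free thick case of Corollary~\ref{cor:thick} and then stitch the holes closed. First I scale $P$ by $6$ to obtain $P^6$; since every pixel becomes a $6\times 6$ block, $P^6$ has vertical thickness at least $6$ everywhere, so each hole-free piece produced below will have thickness at least $4$ and can be assembled by Corollary~\ref{cor:thick} with no further scaling (this is exactly why that corollary was isolated). The two units of thickness beyond the required four leave room for the seam features introduced in the closure step, and keep the overall scale factor at $6$. Next I cut $P^6$ with horizontal lines through all reflex vertices, as in the proof of Theorem~\ref{th:nohole1}, obtaining $\mathcal{O}(k)$ rectangles. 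The rectangle adjacency graph $G$ now has exactly $h$ independent cycles, one per hole, so it is no longer a tree and the hole-free machinery does not apply verbatim.

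The heart of the construction is to break these cycles. For each hole I select one connector adjacency --- using the connector-strip idea of Definition~\ref{def:backbone} to pick the adjacency that links the material around the hole to the rest of $P^6$ --- and declare it a \emph{seam}. Deleting these $h$ seam edges from $G$ leaves a spanning tree $T$. Along $T$ the rectangles form a hole-free, thick structure, which I assemble exactly as in Theorem~\ref{th:nohole1}: the tree is split at rectangle medians to depth $\mathcal{O}(\log k)$, each rectangle is built by Corollary~\ref{cor:thick} in $\mathcal{O}(\log n)$ stages, the jigsaw technique uses $18$ glues and $\mathcal{O}(1)$ tiles, and the whole tree assembles in $\mathcal{O}(\log^2 n)$ stages with $\mathcal{O}(k)$ bins into one supertile occupying all of $P^6$.

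It remains to achieve full connectivity across the $h$ seams. Since each seam is a genuine adjacency in $P^6$, once the loop of rectangles around a hole has been assembled through $T$, the two banks of the seam are rigidly placed in their true, mutually adjacent positions, so closure is a matter of placing matching glues rather than repositioning tiles. I reserve two further glue types $g_{19},g_{20}$ for the seams and give the two banks complementary jigsaw tabs and pockets, so that they can bond only in the single hole-closing configuration and the glue assignment is forced to be consistent all the way around each cycle. To keep the exposed seam glues from creating premature or misplaced bonds at $\tau=1$, I assemble the two banks of each hole as separate chains and zip them together segment by segment in the final stages, keeping the $\mathcal{O}(n)$ seam segments in distinct bins; this is what I expect to raise the bin complexity from $\mathcal{O}(k)$ to $\mathcal{O}(n)$, just as keeping one column per bin did for monotone shapes. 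Tallying, this yields $\mathcal{O}(\log^2 n)$ stages, $18+2=20$ glues, $\mathcal{O}(1)$ tiles, $\mathcal{O}(n)$ bins, and scale factor $6$.

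The main obstacle is precisely the seam closure at $\tau=1$. Because a single matching glue already triggers attachment, I cannot rely on the cooperative two-glue binding that made the $\tau=2$ flooding of Theorem~\ref{th:hole2} work; correctness of the closure must instead be enforced purely geometrically. I expect the delicate points to be (a) arranging the jigsaw complementarity so that each cycle closes in exactly one way and no bank can dock in a shifted position, (b) guaranteeing that the consistency condition forced around each cycle can actually be met with only two extra glue types, and (c) verifying that the exposed seam glues never permit a wrong attachment during the earlier tree-assembly stages --- which is what the segment-by-segment, $\mathcal{O}(n)$-bin zipping and the two extra units of scale are there to guarantee.
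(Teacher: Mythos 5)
Your reduction via a spanning tree of the rectangle adjacency graph has a genuine gap at the seams. First, a factual slip: deleting the $h$ seam \emph{adjacencies} removes no material, so ``along $T$ the rectangles form a hole-free, thick structure'' is false --- the union of the rectangles is still all of $P^6$, holes included; only the combinatorics become a tree. That by itself is repairable, but it matters because the cycles still have to close geometrically, and this is where the construction breaks. For full connectivity the two banks of every seam must carry matching positive-strength glues, and those glues sit on the tiles from the first stage onward; they cannot be added later, and there is no valid staged-assembly operation corresponding to ``zipping'' an internal seam of an already rigid supertile segment by segment --- either the adjacent tiles across the seam have matching glues (and then connectivity is automatic the moment the tree assembly closes the cycle) or they do not. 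So the real question is whether the exposed seam glues cause wrong attachments in intermediate stages at $\tau=1$, where a single matching glue suffices for attachment. With only two seam glue types $g_{19},g_{20}$ shared by all $h$ holes, a bank of hole $A$ exposed on one intermediate supertile can bond to a bank of hole $B$ exposed on another supertile in the same bin; the median-based merge schedule does not keep different holes' banks in separate bins, identical constant-size tab/pocket profiles cannot distinguish them, and you give no argument that overlap always blocks such placements. Your points (a)--(c) correctly name this failure mode but do not resolve it.

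The paper avoids the trap with a different decomposition: it never closes cycles inside a rectangle assembly at all. Instead $P^6$ is partitioned into $S_1$ (all boundary pixels of the outer boundary and of every hole, joined into a single connected supertile by horizontal ``bridges,'' which by a mod-6 coordinate argument induce a tree of boundary components) and $S_2 = P^6\setminus S_1$, which is then genuinely hole-free with vertical thickness $4$, so Corollary~\ref{cor:thick} applies to it. $S_1$ and $S_2$ are assembled in disjoint bins, each marked with the two extra glues \textit{red}/\textit{blue} in opposite directional conventions (north/west faces vs.\ east/south faces), so these glues can never bond $S_1$ to $S_1$ or $S_2$ to $S_2$; they first become usable in the single final merge, where the interlocking geometry of the two complete supertiles forces the unique relative placement. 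That is what makes $2+\max\{18,6\}=20$ glues suffice. To salvage your spanning-tree route you would need an analogous mechanism guaranteeing that matching seam glues are never co-exposed across distinct holes --- e.g., hole-specific glues (costing $\Omega(h)$ glues) or a proof that each cycle-closing merge occurs in a bin where exactly one matching seam pair is exposed; as written, neither is established.
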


\begin{figure}[h]
    \centering
	\begin{subfigure}[t]{\columnwidth}
    		\centering
      		\includegraphics[height=7.5cm]{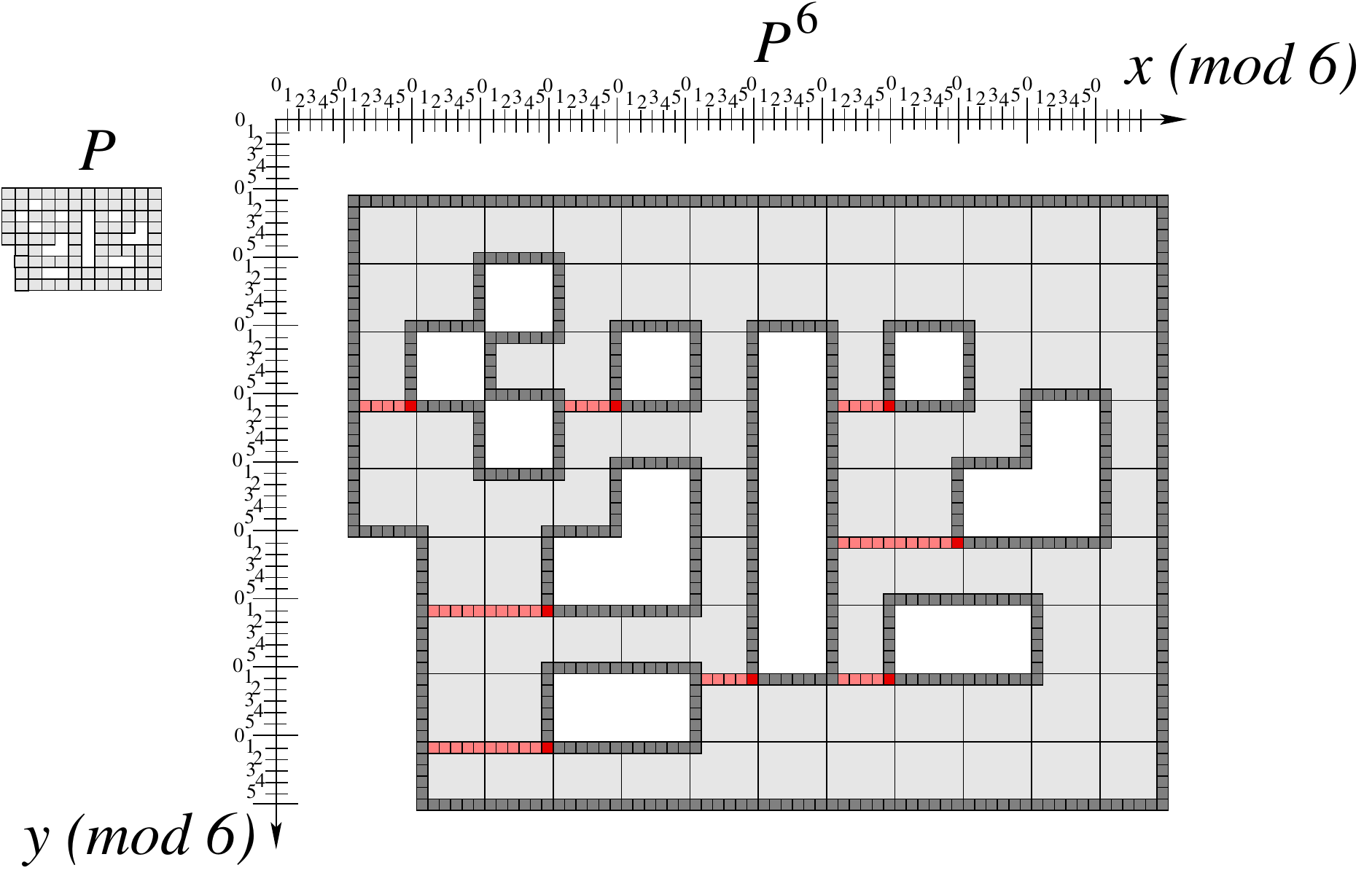} 
    		\caption{\new{An unscaled polymino $P$, the scaled $P^6$ and the construction of $S_1$. Bridges are shown in red.}
}
  	\label{fig:bridges1}
	\end{subfigure}
	\hfil
	\begin{subfigure}[t]{\columnwidth}
    		\centering
      		\includegraphics[height=5.5cm]{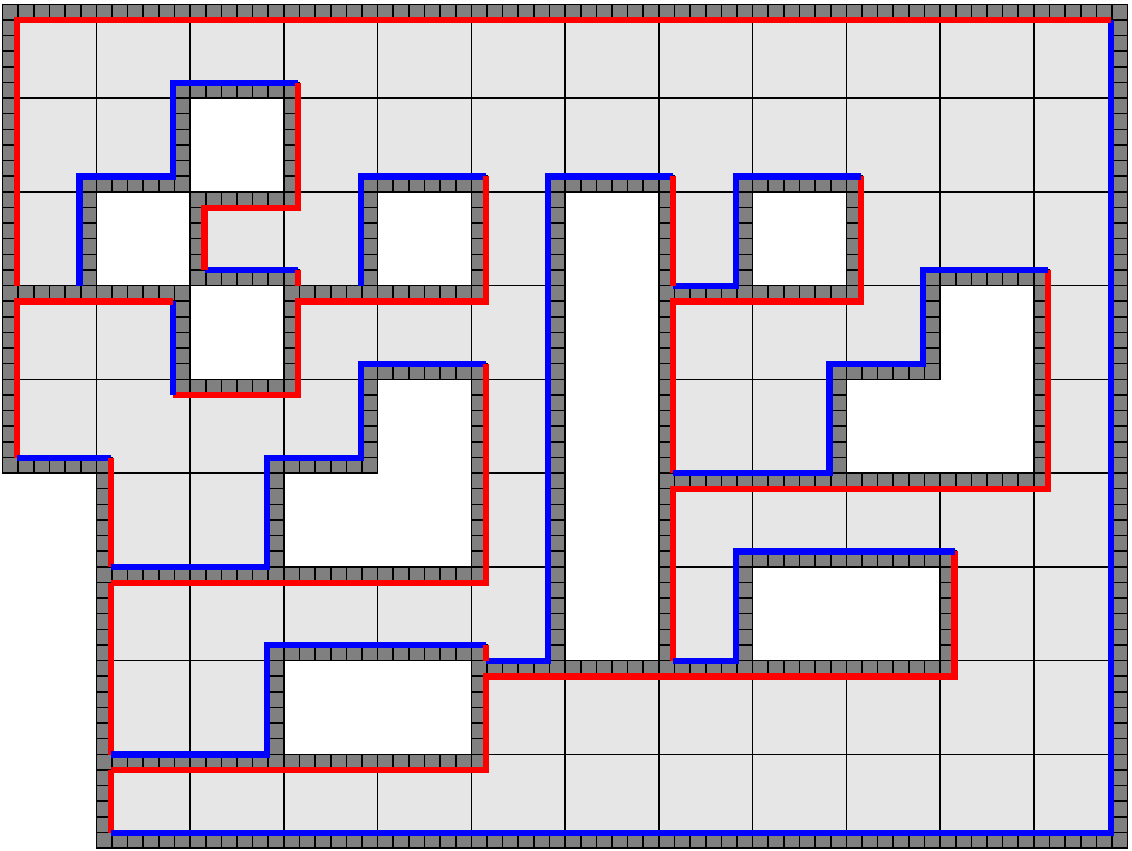} 
    		\caption{\new{The final pieces $S_1$ (dark grey) and $S_2$ (light grey), whith blue and red glue along the common boundary.}
}
  	\label{fig:bridges2}
	\end{subfigure}
    \caption{\new{Partitioning $P^6$ into $S_1$ and $S_2$.}
}
  \label{fig:bridges}
\end{figure}

\begin{proof}
From a high-level point of view, the approach constructs two supertiles
$S_1$ and $S_2$ separately and finally glues them together, see Figure~\ref{fig:bridges} for the overall construction. 
The first supertile
$S_1$ consists of the boundaries of all holes, the boundary of the whole
polyomino, and connections between these boundaries. 
The second supertile $S_2$ is composed of the rest of the polyomino. The
scale factor of 6 guarantees that \new{after removing the boundary pieces connected into $S_1$,}
$S_2$ is hole-free \new{and has a thickness of 4}, which 
allows employing the approach of Theorem~\ref{th:nohole1} \newest{in the version of Corollary~\ref{cor:thick}.}

	{\noinbf{\new{Partition} of $P^6$ into $S_1$ and $S_2$}:}
\new{$S_1$ consists of all boundary pixels of $P^6$, connected by additional connections (``bridges''),
such that the remainder $S_2=P^6\setminus S_1$ is connected and with vertical thickness four.
To this end, consider the set of connected components of boundary pixels of $P^6$; one of them
(say, $C_0$) contains the outside boundary, while the inside components (say, $C_1,\ldots,C_k$) surround holes. 
Because of the scaling, every  boundary pixel has $x$-cooordinate 0 or 1 (modulo 6) or $y$-cooordinate 0 or 1 (modulo 6).
For each inside component $C_i$, consider its bottommost $p_i$ of the leftmost pixels. Because of the scaling, 
its $y$-coordinate is 1 modulo 6. 
For each inside component $C_i$, we add pixels to the left of $p_i$ until we 
hit a pixel of another boundary, say, $C_{j_i}$,
thereby building a {\em bridge} $B_i$ from $C_i$ to $C_{j_i}$. Considering the coordinates of boundary pixels modulo 6, we
conclude that no pixel of a bridge $B_i$ can be adjacent to a pixel of a boundary component other than $C_i$ and $C_{j_i}$.
Therefore, the set of bridges induces a directed tree, with $C_0$ as the root \newest{node}. As a consequence, the complement
$S_2=P^6\setminus S_1$ is hole-free. Furthermore, the horizontal pieces of $S_1$ have $y$-coordinates
0 and 1 modulo 6, so $S_2$ has vertical thickness at least 4: Any pixel at the lower end of a boundary has
$y$ coordinate 1 modulo 6, while a pixel at the upper end of a boundary has $y$ coordinate 0 modulo 6,
so any vertical cut through $S_2$ must contain at least four pixels with $y$-coordinates
2, 3, 4, 5 modulo 6.}


	\newnewtext{\noinbf{Assembling $S_2$}.} \new{Because $S_2$ is hole-free and has vertical thickness 4,}
we can apply the approach of
Theorem~\ref{th:nohole1}. During the construction of $S_2$,
we guarantee that each northern and western face of the boundary of $S_2$ is
marked by the $\textit{red}$ glue and each face from the eastern or southern boundary of
$S_2$ by the $\textit{blue}$ glue, see Figure~\ref{fig:bridges2}. To guarantee that
these two additional glues do not interfere with the self-assembly of $S_2$,
$\textit{red}$ and $\textit{blue}$ are not used in the approach
from Theorem~\ref{th:nohole1}. Overall, we
need $\mathcal{O}(\log^2 (n))$ stages, $20$ glues and $\mathcal{O}(k)$ bins 
for the construction of $S_2$.
	
	\newnewtext{\noinbf{Assembling $S_1$}.} \new{By construction of $S_1$, the bridges induce
a tree between the boundary components. Hence, we can again use a median-based decomposition, i.e., 
recursively choose a boundary component that splits
the tree into components that have at most half the numbers of components.
Each boundary $C_i$, in turn, is again split into two chains that both contain at
most half the number of pixels of $C_i$, see Figure~\ref{fig:constrS2}.}
At the cutting pixels, we use two different glues for a unique
attachment. For the staged self-assembly of the remaining chains, we apply the
approach that is used to self-assemble strips in logarithmically many
stages~\cite{DDF08}. In particular, we split each chain recursively in the
middle and mark the cutting pixels by a third glue type that is not used by
both end pixels of the chain, see Figure~\ref{fig:constrS2}. Finally, a fourth
glue is needed for the attachment of a bridge $B_i$ to its parent component $C_i$ in the tree decomposition. 
This bridge, including its exit pixel at the other component $C_{j_i}$, can be constructed by
applying the same approach that is used to construct the boundary components. Analogously
to the construction of $S_2$, we mark the boundary \new{of $S_1$} by $\textit{red}$ and
$\textit{blue}$, but now in the opposite direction: Each northern and western
face of the boundary of $S_1$ is marked by the blue glue; each face from the
eastern or southern boundary of $S_1$ is marked by the red glue, see
Figure~\ref{fig:bridges2} and Figure~\ref{fig:constrS2}. Again, $\textit{red}$ and
$\textit{blue}$ are not allowed to be used for the construction of the
supertile \new{for} $S_1$. Overall, we need $\mathcal{O}(\log^2 (n))$ stages, $6$
glues and $\mathcal{O}(k)$ bins for the construction of
$S_2$.
	
\begin{figure}[ht]
  \begin{center}

       \includegraphics[height=9cm]{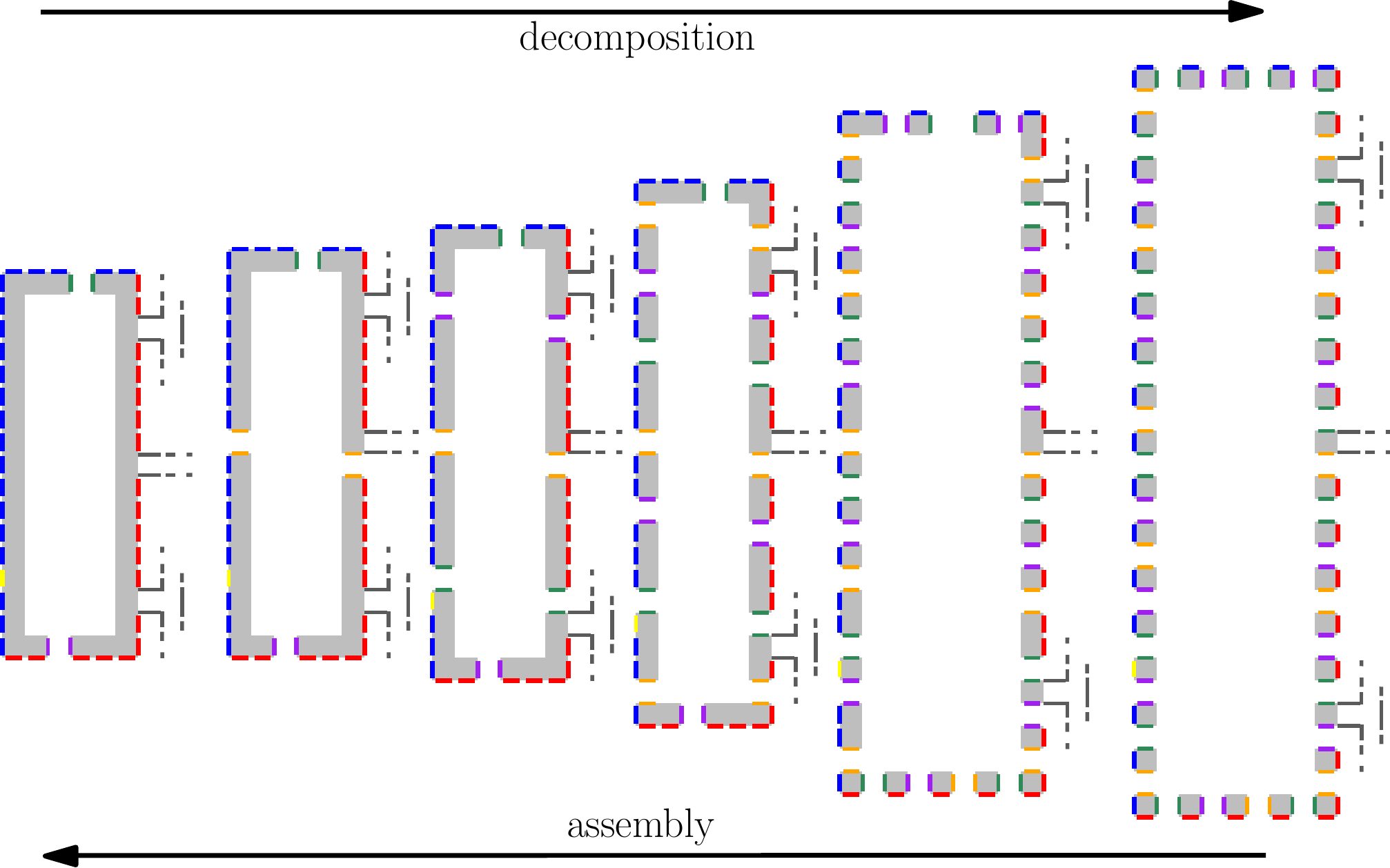}
   
  \end{center}
  \vspace*{-12pt}
  \caption{Recursive separation of a circle from $S_2$.}
  \label{fig:constrS2}
\end{figure}

\new{\noinbf{Putting together $S_1$ and $S_2$}.}
\new{Finally, it is straightforward to see that the geometry of $S_1$ and $S_2$ implies that
they can only attach to each other in the canonical manner, making use of the red and blue glue, as shown in Figure~\ref{fig:bridges2}.}

\new{{\noinbf{Overall glue and stage complexity.}}}
\new{As $S_1$ and $S_2$ are assembled in different bins and
because all these glues bond after the construction of $S_1$ and $S_2$ we are
allowed to make use of 6 glues from the construction of $S_2$ for the construction of
$S_1$. Hence, overall we need $2 + \max \{ 18,6 \} = 20$ glues for the
construction of $P$. Thus, we obtain a total glue complexity of $20$, stage complexity
$\mathcal{O}(\log^2 (n))$, bin complexity $\mathcal{O}(k)$ with a scale factor
of $6$. All \new{bonds} use temperature $\tau=1$.}

	
\end{proof}

\section{Future Work}
Our new methods have the same stage and bin complexity {as previous work \new{on} stage assemblies~\cite{DDF08}} and use just a
small number of glues. Because the bin complexity is in $\mathcal{O}(k)$ for a 
polyomino with $k$ vertices, we may need many bins if the
polyomino has many vertices. Hence, all our methods are excellent for shapes
with a compact geometric description. This still leaves the interesting challenge
of designing a staged assembly system with similar stage, glue and tile complexity,
but a better bin complexity for polyominoes with many vertices, e.g, for $k \in \Omega(n^2)$.

Another interesting challenge is to develop a more efficient 
system for an arbitrary polyomino. Is there a staged assembly
system of stage complexity $o(\log^2 n)$ without increasing the
other complexities? 

\section*{Acknowledgments}
We thank the anonymous reviewers for their patient and constructive approach that greatly helped to improve
the presentation of many aspects of this paper.

\small
\bibliographystyle{abbrv}
\bibliography{literatur}

\end{document}